\documentclass[journal]{IEEEtran}

\usepackage{mathrsfs}
\usepackage{amsfonts}
\usepackage{amsthm}
\theoremstyle{remark}
\newtheorem{definition}{Definition}

\newtheorem{theorem}{Theorem}
\newtheorem{lemma}{Lemma}
\newtheorem{remark}{Remark}
\usepackage{algorithm}
\usepackage{hyperref}
\ifCLASSINFOpdf
   \usepackage[pdftex]{graphicx}
\else
   \usepackage[dvips]{graphicx}
   \graphicspath{{../eps/}}
   \DeclareGraphicsExtensions{.eps}
\fi

\usepackage[cmex10]{amsmath}
%

%
\usepackage{algorithmic}

\begin{document}
%
\title{How Much Frequency Can Be Reused in 5G Cellular Networks---A Matrix Graph Approach}
%
%
%

\author{Yaoqing~Yang,~\IEEEmembership{Student~Member,~IEEE,}
        Bo~Bai,~\IEEEmembership{Member,~IEEE,}
        and~Wei~Chen,~\IEEEmembership{Senior~Member,~IEEE}\vspace{-0.7cm}
\thanks{Yaoqing Yang is with the Department
of Electrical and Computer Engineering, Carnegie Mellon University, Pittsburgh,
PA, 15232 USA e-mail: yyaoqing@andrew.cmu.edu.}
\thanks{Bo Bai and Wei Chen are with the Department of Electronic Engineering, Tsinghua University, Beijing, 100084 China e-mail: {eebobai,wchen}@tsinghua.edu.cn.}
}

%
%

\markboth{}%
{}
%



\maketitle

\begin{abstract}
The 5th Generation cellular network may have the key feature of smaller cell size and denser resource employment, resulted from diminishing resource and increasing communication demands. However, small cell may result in high interference between cells. Moreover, the random geographic patterns of small cell networks make them hard to analyze, at least excluding schemes in the well-accepted hexagonal grid model. In this paper, a new model---the matrix graph is proposed which takes advantage of the small cell size and high inter-cell interference to reduce computation complexity. This model can simulate real world networks accurately and offers convenience in frequency allocation problems which are usually NP-complete. An algorithm dealing with this model is also given, which asymptotically achieves the theoretical limit of frequency allocation, and has a complexity which decreases with cell size and grows linearly with the network size. This new model is specifically proposed to characterize the next-generation cellular networks.\\
\end{abstract}

\begin{IEEEkeywords}
Matrix Graph, Cellular Network, Frequency Reuse, Multi-Coloring.
\end{IEEEkeywords}
%
\IEEEpeerreviewmaketitle

\section{Introduction}
%
%
%
%
\IEEEPARstart{F}{requency} reuse and the cellular concept~\cite{1} is the driven force behind several decades of innovations in the wireless communication field. Many pioneering works~\cite{2}\cite{3}\cite{4} are based on the convenient assumptions that cells, frequency reuse patterns and even user demands, are geographically periodic, often simulated by a regular hexagonal grid model. However, these assumptions have long been suspicious by research simulations and industry practices~\cite{1}\cite{5}. A recent survey~\cite{7} suggested that the existing 4G and future mobile networks may actually have a geographic pattern which falls in between the regular grid model and a totally random graph. This phenomenon indicates the necessity of a new frequency reuse approach.\\
\par
Moreover, the recently emerging 5th generation cellular networks, which are believed to have small cells~\cite{7}\cite{8}, also poses new challenges to the conventional approaches. Small cells are advantageous in higher energy efficiency and indoor coverage, but may cause higher inter-cell interference. This means that the classic one-base-station-downlink model~\cite{9} cannot be used here. Many emerging interference management techniques are devoted to this problem, e.g., Fractional Frequency Reuse~\cite{6}, multi-cell coordination~\cite{10} and cooperations~\cite{11}. Therefore, we would like to answer the question: how much frequency can be reused in a highly cooperative and high-interference small-cell network?\\
\par
In this paper, we propose a new approach called the \emph{matrix graph} to answer this question. A matrix graph is a lattice-like conflict graph while each lattice point is substituted by a small random graph called a cell. The vertices in the graph represent communication links~\cite{17}~\cite{19}, i.e. either uplink or downlink, while the edges represent interference. Confliction graph is widely adopted in cellular communications~\cite{13}-\cite{20} and frequency allocation in a conflict graph can be conveniently treated as multi-coloring problems. We still consider coloring, i.e., frequency allocation, in matrix graphs. But we will show why this lattice-like matrix graph is especially suitable to deal with frequency allocation in the 5G network.\\
\par
In the matrix graph model, we make the cell shapes and sizes random, but still reserve a lattice pattern. As stated above, this matches the real 4G cellular network structure shown in~\cite{7}. Thus, the first merit of the matrix graph model is its high resemblance to real-world networks. The second advantage of the matrix graph is its computation efficiency in a high-interference small-cell network. As shown in Section~\ref{Analysis}, if we increase the inter-cell interference, the computation complexity of multi-coloring will be lower. Preceding works widely recognized the trend of small cells, but seldom did they actively design network models and algorithms to meet this trend.(might need citations)\\
\par
The third virtue of the matrix graph, compared to other graph-based models~\cite{13}-\cite{18}, is being tractable to reach the fundamental limit of frequency allocation. Although in this paper, obtaining the optimal frequency allocation in a matrix graph is proved to be NP-complete, we still obtained a linear-time approximation algorithm with a solution within a bounded gap to the optimal value. This means that for the small-cell network, we can directly tell how much frequency can be reused after the corresponding matrix graph has been constructed. This is in contrast with frequency allocation in general graphs. In fact, frequency allocation problems, like multi-coloring and the related Maximum Weighted Independent Set (MWIS) problems are MAX SNP problems~\cite{20}, which means that even making a performance-guaranteed approximation is NP Hard (See Section~\ref{Graph_Partition_Complexity}). That is why previous works on coloring-based heuristics often lack analytical results. Moreover, it is showed in this paper that coloring a one-dimensional matrix graph has linear-time solution. This model itself is also important, because large networks can be one-dimensional, e.g., a femtocell network along a long road or a wifi-network in a long train.\\
\par
In this paper, the final goal is to achieve the best reuse-interference tradeoff, i.e., obtaining the maximum frequency bands used by each communication link without interference. This is often called the \emph{Maximum Service Frequency Allocation} (MSFA) and has been widely accepted as a benchmark of efficiency, e.g., see survey~\cite{18}. Our method does not rely on specific resource type. For simplicity we assume resources to be frequency bands or OFDMA subcarriers. The only requirement is that any two resources are orthogonal and a specific resource cannot be reused by interfering communication links. There are both works on assuming links~\cite{17}\cite{19} or User Terminals~\cite{15}\cite{16} as confliction agents. We follow the first one because in a 5G network, there may be cooperations between cells and thus, one UT may have a few communication links. Also, we assume that all the heterogeneous base stations are linked to the central network with wired backhaul~\cite{7}\cite{10}\cite{11}. Thus, scheduling can be carried out in the whole network. This large-scale scheduling only incurs an $\mathcal{O}(MN)$ overhead where the network size is $M$-by-$N$. So it prevails exact algorithms which usually have exponential complexity.\\

In summary, we designed a graph-based approach suitable in small cells, which complements the insufficiencies of hexagonal grid models and conflict graph models; we designed an algorithm to allocate frequencies efficiently with a computation complexity growing linearly with network size. The paper is arranged as follows: in Section~\ref{modeling}, the system model based on stochastic geometry is covered; in Section~\ref{Graph_Partition_Section}, the matrix graph approach is formulated; in Section~\ref{Coloring_Algorithm}, a high-efficiency and low-complexity scheduling algorithm is proposed and analysed; Section~\ref{Simulations} discusses simulation results.
\section{System Model}\label{modeling}
\vspace{0.2cm}
We use a graph $G=(V,E)$ (as shown in Fig.~\ref{m_graph}(a)) to represent the network. Each vertex $v\in V$ denotes a communication link in the network and each edge $e\in E$ indicates a confliction between two neighboring communication links. We consider communication links, rather than User Terminals (UTs), as the conflicting agents~\cite{17}\cite{19}. The term \emph{vertex} and \emph{communication link} will be used interchangeably. The graph $G=(V,E)$ is generated by the random connection model~\cite{25}\cite{26} on a rectangular area. The vertices in $V$ is given by a homogeneous Poisson point process (PPP) with point density $\lambda$. For each pair of nodes $v_i$ and $v_j$, $(v_i,v_j)\in E$ with probability $g(\mathbf{x}_i-\mathbf{x}_j)$ where $\mathbf{x}_i$ and $\mathbf{x}_j$ respectively means the position of $v_i$ and $v_j$ in $\mathbb{R}^2$ and $g$ is a function from $\mathbb{R}^2$ to $[0,1]$. $g(\mathbf{x})$ should satisfy $\lim_{|\mathbf{x}|\rightarrow\infty}g(\mathbf{x})=0$ and $e(g):=\int_{\mathbf{x}\in \mathbb{R}^2} {g(\mathbf{x})d\mathbf{x}}<\infty$. In fact, $\lambda e(g)$ is the expected connections per node. Apart from the previous two conditions, we make a further assumption
\begin{equation}\label{RanCon_As1}
  g(\mathbf{x}_1)<g(\mathbf{x}_2),\text{ if }|\mathbf{x}_1|>|\mathbf{x}_2|,
\end{equation}
which means that interference probability decreases when distance increases.

Assume there are $C$ colors \emph{colors} $\Lambda=\{1,2,...,C\}$ to allocate. Each color $c$ can be treated as a frequency band. A mapping $\mathscr{C}$, called \emph{multi-coloring}, can be used to represent frequency allocation.
\begin{equation}\label{eq0}
\mathscr{C}(v,c) = \left\{ {\begin{array}{*{20}{c}}
{1,}\\
{0,}
\end{array}\begin{array}{*{20}{c}}
\text{when $c$ is allocated to $v$,}\\
\text{otherwise.}
\end{array}} \right.
\end{equation}
We write the \emph{reuse ratio}
\begin{equation}\label{2}
{{{f}}_v} =\frac{1}{C}\mathop \sum \limits_{c  = 1}^{C} {\mathscr{C}(v,c)\mu(v,c)}
\end{equation}
for the portion of the total frequency bands used by $v$. $\mu(v,c)$ stands for the \emph{color weight} which will be discussed in Remark~\ref{remark_1}. Then the conventional frequency allocation problem can be written as a multi-coloring problem:
\begin{equation}\label{color}
\begin{split}
&\max_{\mathscr{C}} {\;}\bar f =({\mathop \sum \limits_{v\in V}  w_{v}f_{v}})/({\mathop \sum \limits_{v\in V}  w_{v}}),\\
&\text{s.t.}{\;}\mathscr{C}(v_1,c)+\mathscr{C}(v_2,c)\le 1, \text{ if}{\;}(v_1,v_2)\in E.
\end{split}
\end{equation}
where $w_v$ denotes the \emph{vertex weight} which will be explained in Remark~\ref{remark_1}. The constraint in (3) ensures that no conflicting links are assigned the same frequency band. Note that multi-coloring allows each vertex to be assigned more than 1 colors, in order to achieve a higher reuse ratio~\cite{14}\cite{15}\cite{16}\cite{19}. However, this problem is NP-complete~\cite{18}. The next subsection shows how to simplify the problem with a matrix graph approach.
\begin{remark}\label{remark_1}
A quick example about vertex weight $w_v$ is the frequency allocation between cell centers and cell edges~\cite{6}. We will assign bigger weights to cell edges, where channel conditions are poor. For color weights $\mu(v,c)$, if we make the assumption that
\[\mu(v,c)=\log(1+SNR_v)=\log(1+P(v,c)h(v,c)/\sigma^2),\]
where $P(v,c)$ and $h(v,c)$ denotes the transmit power and channel gain, then optimizing $f_v$ becomes optimizing channel capacity. We only consider large-scale fading so that this information is available at the central node. We can also assume $\mu(v,c)=1$. Then ${{f}}_v\in[0,1]$ is exactly the ratio of the available frequency bands that $v$ can utilize. Optimizing $f_v$ now is the same as the Maximum Service Frequency Allocation~\cite{18}. Therefore, optimizing $f_v$ is consistent with multiple classic optimization problems in communications.
\end{remark}
\section{A Matrix Graph Approach}\label{Graph_Partition_Section}
\begin{figure}
\centering
\includegraphics[scale=0.35]{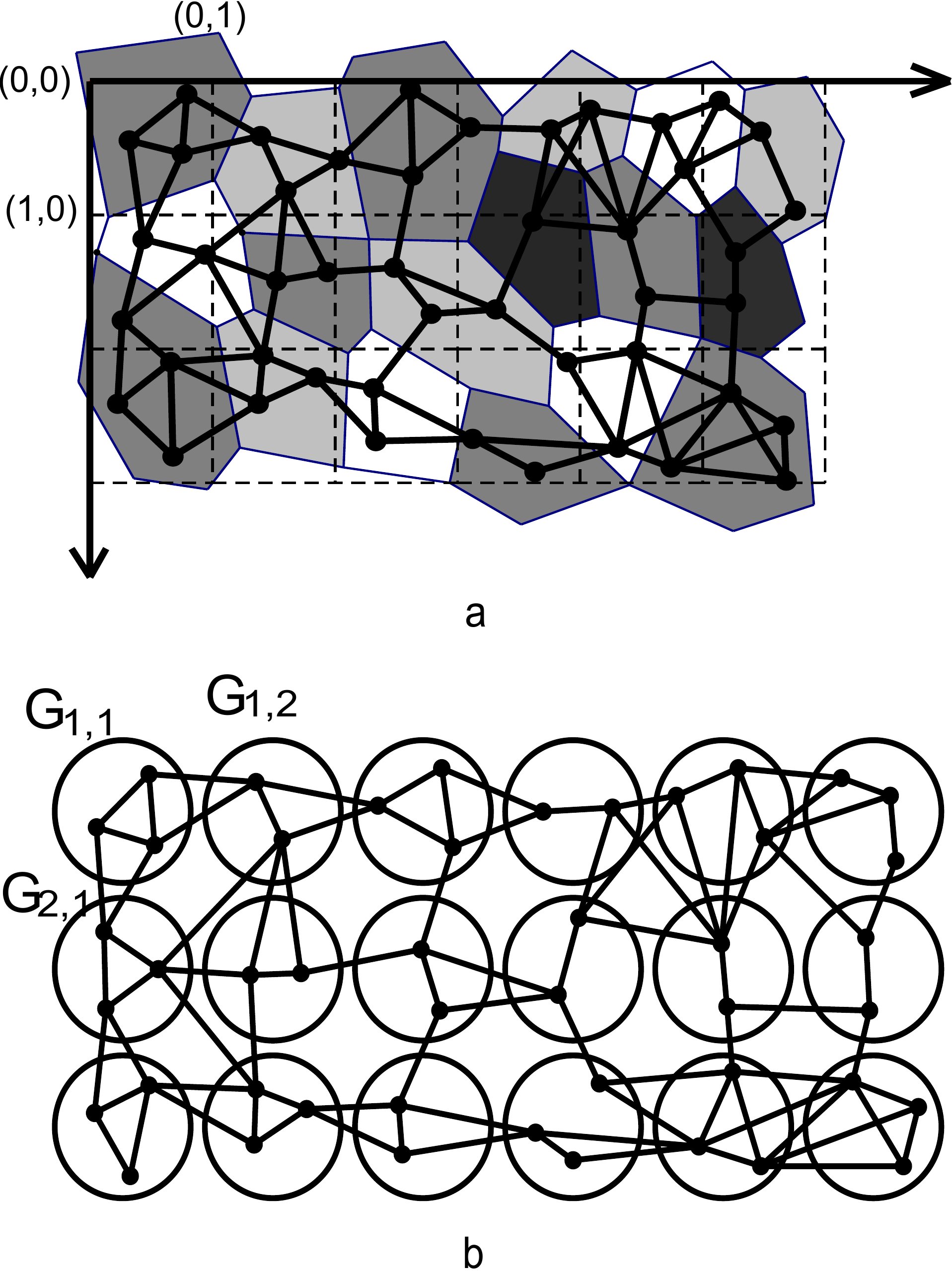}
\caption{System Model: (a). A conflict graph. (b). The corresponding matrix
graph. Only neighboring cells interfere with each other.}\label{fig1，2}\vspace{-0.5cm}
\label{m_graph}
\end{figure}
In this section we show that in small-cell networks, a confliction graph can be transformed into a matrix graph with bounded performance loss in multi-coloring. Fig.~\ref{m_graph}(a) shows a typical small-cell conflict graph. Each cell has $1\sim3$ communication links. However, inter-cell and intra-cell interference is quite complicated. This motivates us to transform this kind of random graph into a more manageable structure--the matrix graph (as shown in Fig.~\ref{m_graph}(b)).
\subsection{Matrix Graph Formulation}
We partition the rectangular area with $M+1$ parallel lines horizontally and $N+1$ parallel lines vertically with distance $a$ which will be determined later (as shown Fig.~\ref{m_graph}(a)). Therefore, the rectangular area is partitioned into $MN$ squares. Meanwhile, vertices are separated into each square
\begin{equation}\label{3}
{ {V}} = \mathop \bigcup \limits_{m=1}^{M}\mathop \bigcup \limits_{n = 1 }^{N}  {{ {V}}_{m,n}},
\end{equation}
where $V_{m,n}$ denotes the vertices in the square constituted by lattice points $\{(m-1,n-1),(m,n-1),(m-1,n),(m,n)\}$ and
\begin{equation}\label{4}
V_{m,n}=\{v_{m,n}^i\}_{i=1}^{l_{m,n}}.
\end{equation}
Here $l_{m,n}$ is the number of vertices in this square and $v_{m,n}^i$ means the $i$th vertex. We use \(G_{m,n}\) to denote the induced graph by $V_{m,n}$ from $G$, which means that $G_{m,n}=(V_{m,n},E_{m,n})$ and $E_{m,n}=\{e\in E|e=(v_{m,n}^i,v_{m,n}^j),1\le i<j\le l_{m,n}\}$. Each \(G_{m,n}\) is surrounded by a circle in Fig. 2(b). From now on, \(G_{m,n}\) will also be called a \emph{cell}.\\

Then we induce all edges from $G$ to construct a matrix graph like Fig.~\ref{m_graph}(b), except those edges that connect non-adjacent cells in the constructed graph, e.g., \(G_{m,n}\) and \(G_{m+2,n}\). The constructed matrix graph $\tilde{G}=(V,\tilde{E})$ satisfies
\[\tilde{E}=E-\{(v_{m,n}^i,v_{p,q}^j)|\text{ and }|m-p|>1\text{ or }|n-q|>1\}.\]
Suppose we now carry out the multi-coloring algorithm and find the best coloring in the constructed matrix graph, then it may not be a legal coloring in the previous graph because we may have neglected the conflictions of some dropped edges in the graph partitioning. So, if the two vertices on a dropped edge $(v_i,v_j)\in E$ assigned the same color $c$, we will cancel the usage of $c$ in one of these two vertices. We call this procedure the \emph{validity check}.
\begin{lemma}\label{SG_lmm}
Denote the optimal reuse ratio in the original random graph by $\bar{f}^{*}$ and the reuse ratio after doing graph partitioning, optimal multi-coloring and validity check by $\bar{f}_{m}$. Then
\begin{equation}\label{Drop_Edge_optimal}
E[\bar{f}^{*}-\bar{f}_{m}]<\frac{1}{2}\lambda \int_{\mathbf{x}\in\Omega}g(\mathbf{x})d\mathbf{x},
\end{equation}
where $g(\cdot)$ is the connection probability function in~\eqref{RanCon_As1} for the random connection model and $\Omega$ is the region
\[\Omega=\{\mathbf{x}=(x^1,x^2)\in \mathbb{R}^2||x^1|>a\text{ or }|x^2|>a\},\]
where $a$ is the distance between parallel lines in partitioning.
\end{lemma}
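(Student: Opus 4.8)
The plan is to split the gap $\bar f^{*}-\bar f_m$ at the optimal coloring of the pruned graph $\tilde G$ and to bound the two resulting pieces separately. First I would note that since $\tilde E\subseteq E$, dropping the long edges only deletes constraints from the multi-coloring program \eqref{color}, so its feasible set enlarges and the optimum cannot decrease; writing $\bar f^{*}_{\tilde G}$ for the optimal reuse ratio on $\tilde G$, this gives $\bar f^{*}_{\tilde G}\ge \bar f^{*}$ and hence $\bar f^{*}-\bar f_m\le \bar f^{*}_{\tilde G}-\bar f_m$. This reduces the lemma to controlling the loss caused purely by the validity check applied to an optimal coloring of $\tilde G$.

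Next I would bound that loss by the number $D$ of dropped edges $E\setminus\tilde E$. Recall the validity check cancels, on each dropped edge whose two endpoints share a color, the usage of that color at one endpoint. In the Maximum Service Frequency Allocation regime $\mu\equiv1,\ w\equiv1$ of Remark~\ref{remark_1}, the objective is $\bar f=\tfrac{1}{|V|C}\sum_{v,c}\mathscr C(v,c)$, so each cancellation costs exactly $\tfrac{1}{|V|C}$; since a single dropped edge forces at most $C$ cancellations (one per color), the total drop obeys $\bar f^{*}_{\tilde G}-\bar f_m\le \tfrac{CD}{|V|C}=\tfrac{D}{|V|}$. The general weighted case is identical after replacing the per-color cost by $\tfrac{w_v\mu(v,c)}{C\sum_u w_u}$ and always cancelling the color at the cheaper of the two endpoints. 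I expect this accounting to be the delicate step: a vertex typically lies on several dropped edges, so the bound must be organised per dropped edge rather than per vertex, and the normalisation by $\sum_u w_u$ in the weighted objective must be tracked carefully.

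Finally I would evaluate $E[D/|V|]$. The geometric input is that a dropped edge joins vertices in cells with $|m-p|>1$ or $|n-q|>1$, which forces a horizontal or vertical separation of at least $a$; thus the displacement of its endpoints lies in $\Omega$ up to its measure-zero boundary, and $D$ is at most the number of connected pairs whose displacement lies in $\Omega$. To handle the expectation of a ratio I would condition on the Poisson count $|V|=k$, under which the points are i.i.d.\ uniform on the area $A$; for a fixed unordered pair, $P(\text{dropped})\le \tfrac{1}{|A|^2}\int_A\!\int_A \mathbf 1[\mathbf x-\mathbf y\in\Omega]\,g(\mathbf x-\mathbf y)\,d\mathbf x\,d\mathbf y\le \tfrac{1}{|A|}\int_\Omega g$. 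Hence $E[D\mid k]\le \binom{k}{2}\tfrac{1}{|A|}\int_\Omega g$ and $E[D/|V|\mid k]\le \tfrac{k-1}{2|A|}\int_\Omega g$; averaging over $k\sim\mathrm{Poisson}(\lambda|A|)$ and using $E[k-1]<E[k]=\lambda|A|$ yields $E[D/|V|]<\tfrac{\lambda}{2}\int_\Omega g$. Chaining the three bounds gives $E[\bar f^{*}-\bar f_m]<\tfrac12\lambda\int_\Omega g$, as claimed; the strictness already follows from the $k-1$ factor and is reinforced by the cell non-adjacency condition being strictly stronger than displacement in $\Omega$.
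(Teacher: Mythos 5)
Your proposal is correct and follows essentially the same route as the paper's own proof: the chain $\bar f^{*}\le \bar f^{*}_{\tilde G}$ (dropping edges only enlarges the feasible set), the bound of the validity-check loss by (dropped edges)$/|V|$ at a cost of at most one cancellation per color per dropped edge, and then the Poisson conditioning with i.i.d.\ uniform points, the change of variables $\mathbf y=\mathbf x_2-\mathbf x_1$, and $E[N-1]<\lambda|\Omega_0|$ to get the strict constant $\tfrac{\lambda}{2}\int_\Omega g$. The only differences are cosmetic---you count unordered pairs via $\binom{k}{2}$ where the paper sums per-node excluded-edge counts $E_i$ with a factor $\tfrac{1}{2}$ for double counting---and your caveat about the weighted case matches the paper, whose proof likewise carries out the accounting only at the unweighted $E_c/|V|$ level.
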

Lemma~\ref{SG_lmm} is proved in the Appendix~\ref{Graph_Partition_performance_loss}. This lemma ensures that the performance loss of the graph partitioning procedure is bounded, and the loss is especially small when $g(\cdot)$ is decreasing quickly. For example, if we consider the boolean model~\cite{26} where
\[g(\mathbf{x})=\mathbb{I}_{\{|\mathbf{x}|\le 2r\}},\]
then the performance loss is zero if the chosen parameter $a>r$. This lemma is the justification for matrix graphs and hence we can give out the definition of the matrix graph.
\begin{definition}
A matrix graph is a conflict graph $G=(V,E)$ where $V$ satisfies~\eqref{3}\eqref{4} and an edge \((v_{m_1,n_1}^i,v_{m_2,n_2}^j)\in E\) only if
\begin{equation}\label{5}
(|m_1-m_2|\le 1) \text{ and } ( |n_1-n_2|\le 1).
\end{equation}
The constraint~\eqref{5} ensures that only neighboring cells in the matrix graph have conflictions.\\
\end{definition}
By abuse of notation in~\eqref{2}, we use $f_{m,n}^i$ to represent the reuse ratio for $v_{m,n}^i$ in a matrix graph, meaning that
\begin{equation}\label{6}
{{{f}}_{m,n}^i}={f_{v_{m,n}^i}}=\frac{1}{{ C}}\mathop \sum \limits_{c  = 1}^{C} {\mathscr{C}(v_{m,n}^i,c )\mu(v_{m,n}^i,c )},
\end{equation}
where $\mathbf{\mu}=(\mu(v_{m,n}^i,c))$ denotes the color weight discussed in Remark~\ref{remark_1}. Then the ultimate goal of maximizing the \emph{weighted reuse ratio} in a matrix graph can be written as
\begin{equation}\label{7}
\begin{split}
&\max_{\mathscr{C}}{\;}{\;}\bar f = \frac{\mathop \sum \limits_{m,n=  1}^{M,N} \mathop \sum \limits_{i = 1}^{{l_{m,n}}} w_{m,n}^if_{m,n}^i}{\mathop \sum \limits_{m ,n=  1}^{M,N} \mathop \sum \limits_{i = 1}^{{l_{m,n}}} w_{m,n}^i},\\
&\text{s.t.}{\;}\mathscr{C}(v_1,c)+\mathscr{C}(v_2,c)\le 1, \text{ if}{\;}(v_1,v_2)\in E.
\end{split}
\end{equation}
where $w_{m,n}^i$ indicates the vertex weight discussed in Remark~\ref{remark_1}. This is called the \emph{matrix graph coloring} problem (MGC).
\subsection{Multi-coloring Complexity in Matrix Graphs}\label{Graph_Partition_Complexity}
In this section we briefly discuss the computation advantage of matrix graphs over general graphs. By accepting the loss bounded by (7), we expect to gain advantage in multi-coloring computation. Although we have the following theorem which will be proved in the Appendix~\ref{Proving_NP_complete}, we still obtained an linear complexity approximate algorithm to achieve a bounded performance.
\begin{theorem}
MGC problem is NP-complete.
\end{theorem}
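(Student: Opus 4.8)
The plan is to first fix a decision version of MGC and show it lies in NP, then establish NP-hardness by a reduction from a classical NP-complete problem. For the decision version I would ask, given a matrix graph $\tilde G=(V,\tilde E)$, weights $w_{m,n}^i$ and $\mu(v_{m,n}^i,c)$, a palette size $C$, and a rational threshold $\rho$, whether $\max_{\mathscr{C}}\bar f\ge\rho$ in~\eqref{7}. Membership in NP is immediate: a multi-coloring $\mathscr{C}$ is a binary array of size $|V|\times C$, and one checks in time $O(|\tilde E|\,C)$ that adjacent vertices never share a color and computes $\bar f$ in time $O(|V|\,C)$, so $\mathscr{C}$ is a polynomial certificate.

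For hardness I would reduce from \textsc{Independent Set}. The key observation is that, with uniform weights $\mu\equiv 1$ and $w\equiv 1$, the optimum of~\eqref{7} equals $\alpha(\tilde G)/|V|$, where $\alpha$ denotes the independence number. Indeed, for each fixed color $c$ the set of vertices carrying $c$ is an independent set, so $\sum_v\mathscr{C}(v,c)\le\alpha(\tilde G)$ and hence $\sum_v f_v\le\alpha(\tilde G)$; conversely, assigning all $C$ colors to every vertex of a maximum independent set is a proper multi-coloring attaining this bound. Thus $\max_{\mathscr{C}}\bar f=\alpha(\tilde G)/|V|$.

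Given an arbitrary instance $(G,k)$ of \textsc{Independent Set}, I would take the single-cell matrix graph $\tilde G=G$ (set $M=N=1$, which is a legitimate matrix graph since every edge then satisfies~\eqref{5}), put $w\equiv1$, $\mu\equiv1$, $C=1$, and set $\rho=k/|V|$. By the identity above, $\max\bar f\ge\rho$ iff $\alpha(G)\ge k$, so the reduction is correct and clearly polynomial. Combined with membership in NP, this shows MGC is NP-complete.

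The main obstacle---and the point I would be most careful about---is that MGC optimizes a weighted average rather than a per-vertex feasibility condition, so naive reductions from graph colorability fail (a large average reuse ratio can be met by piling colors on a few low-degree vertices while starving others). Routing the hardness through the independence number sidesteps this. A secondary subtlety worth flagging is that the single-cell construction uses one large cell; since the companion result shows that one-dimensional matrix graphs are colorable in linear time, any attempt to prove hardness with uniformly small cells cannot use a $1\times N$ layout and would instead require embedding a planar hard instance into a genuine two-dimensional grid, with gadgets routed along the king-move adjacency of~\eqref{5}---a considerably more delicate construction that the single-cell reduction lets us avoid.
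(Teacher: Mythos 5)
Your proof is correct for the theorem as literally stated, but it takes a genuinely different---and strictly weaker---route than the paper. The paper reduces from the square tiling problem with Wang tiles: each lattice point $(m,n)$ becomes a cell that is a complete graph $K_l$ on the tile set, inter-cell edges encode east/west and north/south color mismatches, and a valid tiling of the $M\times N$ square corresponds exactly to an independent set of normalized weighted cardinality $1/l$; MGC then inherits hardness because MWIS is the one-color case of MGC. Your reduction also routes through MWIS with $C=1$ (your identity $\max_{\mathscr{C}}\bar f=\alpha(\tilde G)/|V|$ under uniform weights is the single-color instance of the per-color decomposition the paper uses in Theorem 2), but you collapse the entire \textsc{Independent Set} instance into one cell with $M=N=1$, which Definition 1 indeed permits. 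This settles the literal statement at almost no cost, since matrix graphs with unbounded cell size trivially contain all graphs; what it does not deliver, and what the paper's tiling reduction does, is hardness in the regime the model is built for, namely cells whose size is bounded independently of $M$ and $N$. That stronger form is what makes the surrounding claims coherent: the algorithm's complexity $\mathcal{O}(CK^{L-1}MN)$ is polynomial only when $K$ is bounded, and the contrast drawn in Section III-B (bounded-degree matrix graphs admit a PTAS, yet exact MGC is NP-complete) is vacuous unless NP-completeness survives the bounded-cell restriction---in your single-cell instance $K$ can be of order $2^{|V|}$, so your hard instances are precisely those on which the paper's algorithm was never claimed to be efficient. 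You flag this honestly in your closing paragraph; the ``considerably more delicate construction'' you defer is exactly the Wang-tile embedding into the two-dimensional lattice that the paper carries out. In short: a valid proof of the stated theorem, but if the intended reading is hardness for genuinely small-cell matrix graphs, your argument must be upgraded along the lines you sketched, and the paper shows how.
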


In fact, optimal multi-coloring problem is polynomially equivalent to the maximum weighted independent set problem(MWIS), which has been proved to be NP-complete. Moreover, MWIS problem on a general graph is proved to be in the complexity equivalent class MAX SNP problems~\cite{20}, so there is an $\varepsilon>0$ such that the MWIS problem cannot be approximated in polynomial time with performance ratio greater than $\frac{1}{n^\varepsilon}$, unless P=NP. Therefore, the matrix graph multi-coloring is much better because we can use linear time to construct a solution with only a bounded performance gap to the optimum (as shown in Section~\ref{Analysis}).\\

A finer result about bounded-degree graph is that MWIS problem on general bounded-degree graph is APX but APX-complete~\cite{21}. That is to say, MWIS performance on bounded-degree graphs can be approximated within some constant (e.g., $50\%$) but it cannot be approximated arbitrarily close to $100\%$ with polynomial-time algorithms, unless P=NP. However, in a bounded-degree matrix graph, we will show in Section~\ref{Analysis} that for any $\epsilon>0$, we can choose a parameter $L=\min(M,\frac{1}{\epsilon})$ such that an algorithm with complexity exponential in $L$ achieves $1-\epsilon$ of the optimum. Thus, the bounded-degree matrix graph multi-coloring can be divided into the class PTAS~\cite{21}, instead of APX-complete.\\

In summary, matrix graphs can be treated as approximated models specified for small-cell networks, which are much easier for frequency allocation than general conflict graph.
\section{Solving The Matrix Graph Coloring Problem}\label{Coloring_Algorithm}
\vspace{0.2cm}
In this section, we first use a scheme called \emph{floor division} to map the original MGC problem into many one-dimensional Maximum Weighted Independent Set (MWIS) problems. Then we solve each MWIS problem and combine the results with approximation techniques. The final algorithm to solve the MGC problem is outlined in Algorithm 1. In subsection~\ref{Describing_Algorithm} we give an overview of Algorithm 1. In subsection~\ref{Analysis}, we analyze the performance and complexity of Algorithm 1.
\subsection{Approximation Algorithm with a Floor Dividing Method}\label{Describing_Algorithm}
\vspace{0.2cm}
\subsubsection{Finding Independent Sets in one-dimensional graph}
In a matrix graph, an IS generally represents a subset of communication links who do not conflict with each other when utilizing the same frequency. Specifically, for a graph \(G=(V,E)\), a vertex subset \(S\subset V\) is called independent if no two vertices in $S$ share the same edge in $E$. An IS $S$ in a matrix graph $G$ can be decomposed into $MN$ small ISs
\begin{equation}\label{8}
{ {S}} = \mathop \bigcup \limits_{m=1}^{M}\mathop \bigcup \limits_{n = 1 }^{N}  {{ {S}}_{m,n}},
\end{equation}
and each \(S_{m,n}\subset V_{m,n}\) is an independent set in the cell $G_{m,n}$. For each vertex \(v_{m,n}^{i}\), if we use \(q_{m,n}^{i}\in\{0,1\}\) to denote whether \(v_{m,n}^{i}\in S_{m,n}\), we can define the \emph{normalized weighted cardinality} (NWC) $|\cdot|_N$ of \(S\) as
\begin{equation}\label{9}
\left| { {S}} \right|_N =\frac{\mathop \sum \limits_{v \in V}q_v u_v}{\mathop \sum \limits_{v \in V}u_v} =\frac{\mathop \sum \limits_{m,n = 1}^{M,N} \sum \limits_{i=1}^{l_{m,n}} q_{m,n}^i u_{m,n}^i}{\mathop \sum \limits_{m,n = 1}^{M,N} \sum \limits_{i=1}^{l_{m,n}} u_{m,n}^i},
\end{equation}
where $u_{m,n}^i$ is the vertex weight. If $u_{m,n}^i=1$ for all vertices, the NWC simply equals to ratio of $|S|/|V|$ where $|\cdot|$ means cardinality. It is clear that $|S|_N$ takes value in $[0,1]$. The indicator vector \(\mathbf{q}=(q_{m,n}^{i})\) in~\eqref{9} can represent the solution $S$. In the following we call this $\mathbf{q}$ the \emph{indicator representation} of an independent set.\\

\begin{definition}We call $S^*\subset V$ the \emph{maximum weighted independent set} (MWIS) of graph $G=(V,E)$ if it is an independent set with the maximum normalized weighted cardinality~\eqref{9}.\\
\end{definition}
\begin{lemma}\label{lmm_DP}
Finding MWIS in a one-dimensional matrix graph can be completely solved with $\mathcal{O}(KN)$ time complexity by dynamic programming~\cite{12}, where $K$ is the supremum of the number of Independent Sets in each cell $G_{m,n}$.
\end{lemma}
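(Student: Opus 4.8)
The plan is to exploit the path-like structure that one-dimensionality forces on the cells and to solve the MWIS by a left-to-right dynamic program over the cells. Without loss of generality I take $M=1$, so the cells $G_{1,1},\dots,G_{1,N}$ sit in a line and, by the defining constraint~\eqref{5}, an edge can only join two vertices lying in the same cell or in two consecutive cells $G_{1,n}$ and $G_{1,n+1}$. First I would observe that, since the denominator $\sum_{v\in V}u_v$ in the NWC~\eqref{9} is a fixed constant, maximizing $|S|_N$ is equivalent to maximizing the additive total weight $w(S)=\sum_{v\in S}u_v$; hence the problem reduces to an ordinary maximum-weight independent set with the vertex weights $u_v$.

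Next I would enumerate, inside each cell $G_{1,n}$, its independent sets $I_{n,1},\dots,I_{n,k_n}$ (with $k_n\le K$ by hypothesis), collected in a list $\mathcal I_n$, and call two sets $I\in\mathcal I_{n-1}$ and $J\in\mathcal I_{n}$ \emph{compatible} when no edge of the matrix graph joins a vertex of $I$ to a vertex of $J$. Define $F_n(J)$ to be the largest total weight achievable on cells $G_{1,1},\dots,G_{1,n}$ by a globally independent selection whose intersection with the last cell is exactly $J$. The recurrence is
\begin{equation}
F_n(J)=w(J)+\max\{\,F_{n-1}(I): I\in\mathcal I_{n-1}\text{ compatible with }J\,\},
\end{equation}
with base case $F_1(J)=w(J)$, and the optimum over the whole graph is $\max_{J\in\mathcal I_N}F_N(J)$. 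Recording the maximizing choices cell by cell then reconstructs the indicator representation $\mathbf q$ of the optimal set.

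The correctness argument I would give is an induction on $n$ establishing the optimal-substructure property, and this is the step I expect to be the real obstacle. The crux is to show that once the independent set chosen in cell $G_{1,n}$ is fixed, the selections in $G_{1,1},\dots,G_{1,n-1}$ become completely decoupled from those in $G_{1,n+1},\dots,G_{1,N}$. This is exactly where one-dimensionality is essential: because edges only reach between consecutive cells, cell $G_{1,n}$ is the sole interface between the two sides, so no constraint links the left block to the right block except through $J$. Consequently the restriction of any optimal global solution to the left block must itself be optimal among left-block selections whose last cell equals $J$ --- precisely what the recurrence presumes --- and the induction closes. One must also verify that every candidate $J$ ranges only over genuine independent sets of $G_{1,n}$, so that within-cell independence is automatically respected.

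Finally, for the complexity I would note that the table has at most $K$ entries in each of the $N$ columns and is filled in a single left-to-right sweep, each entry requiring only a local maximization against the previous cell's compatible states. Enumerating a cell's independent sets together with the adjacency relation between neighboring cells is a precomputation that does not grow with $N$, so the dependence on the network length is linear, yielding the claimed $\mathcal O(KN)$ running time; the one place to be careful about the hidden constant is the compatibility scan across an adjacent pair of cells, which is bounded purely by the cell structure and not by $N$.
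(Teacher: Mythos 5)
Your proposal matches the paper's own argument essentially step for step: Appendix~C sets up exactly this dynamic program, with the independent sets $X_n=\{\alpha_n^1,\dots,\alpha_n^{K_n}\}$ of each cell as states, the compatibility relation $R_{n,n+1}$ between consecutive cells, and a Viterbi-style Bellman recursion that keeps one best partial sequence $\wp^{k_n}_{(n)}$ per state (Algorithm~2), which is just the path-storing form of your value function $F_n(J)$ with backpointers. You even supply the optimal-substructure induction that the paper declares ``quite straightforward'' and omits, and your remark that the compatibility scan between adjacent cells is a cell-structure constant independent of $N$ agrees with how the paper accounts for the $\mathcal{O}(KN)$ bound.
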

\begin{proof}
See Appendix \ref{Dynamic_Programming}.
\end{proof}
A one-dimensional matrix graph is a matrix graph with height $M = 1$. Lemma~\ref{lmm_DP} ensures the linear complexity of finding the MWIS on a one-dimensional matrix graph. Therefore, the original two-dimensional matrix graph can be partitioned into many one-dimensional subgraphs and MWISs can be found on each of them efficiently. Section~\ref{Analysis} shows that this scheme can achieve a bounded optimality gap.\\

\subsubsection{Floor Dividing and the matrix graph decomposition}
\begin{figure}
  \centering
  \includegraphics[scale=0.25]{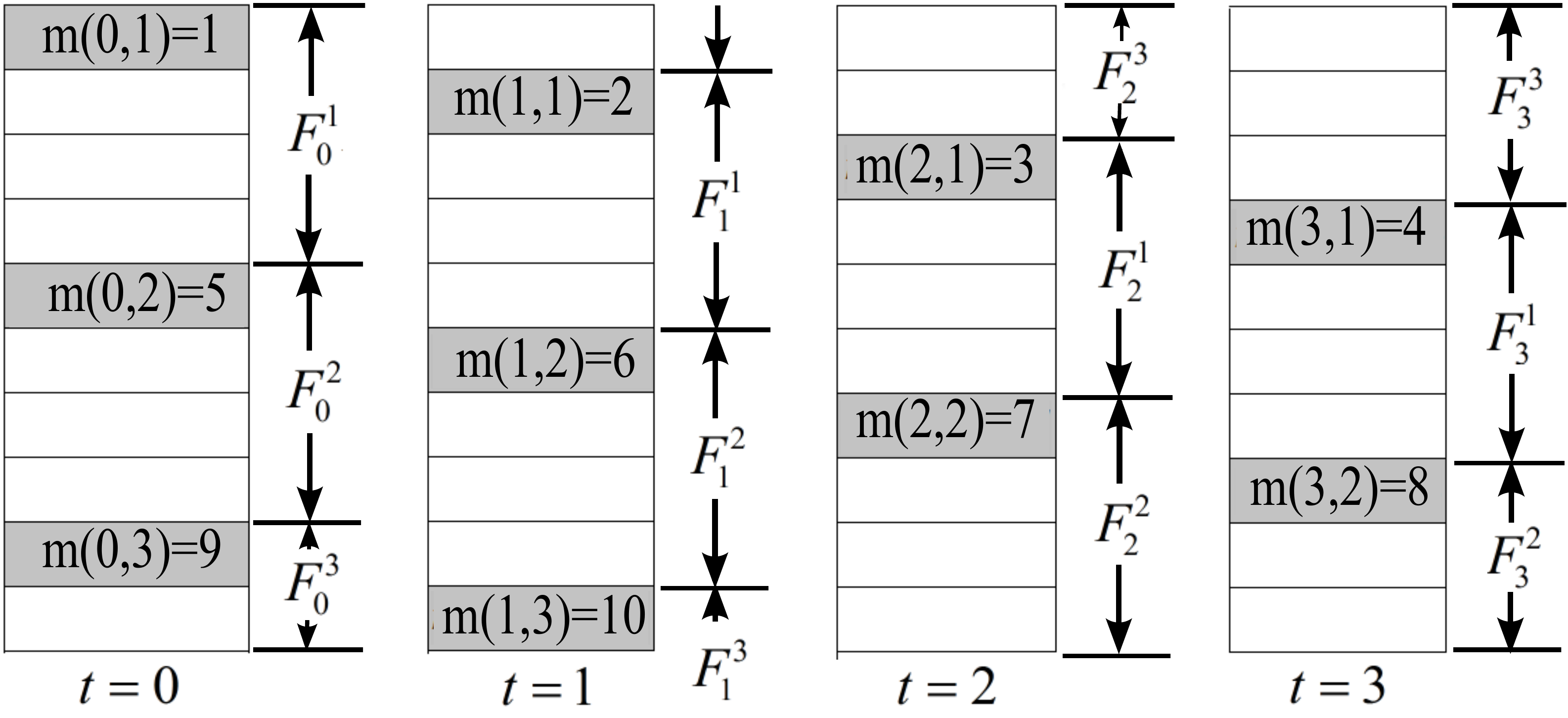}\\
  \caption{\emph{A floor division scheme when $M = 10$, $L = 4$, and $Q = 3$, with $L=4$ floor divisions. Marginal rows are colored differently.}\vspace{-0.5cm}}\label{fig2}
\end{figure}
In order to decompose the MWIS problem, we need to divide the whole M-by-N matrix graph into many one-dimensional subgraphs. Thus, a method called Floor Dividing is proposed. This scheme concurrently separates several copies of the $M$-by-$N$ graph into several slender subgraphs (as shown in Fig.~\ref{fig2}) and views each subgraph as a one-dimensional matrix graph. First we choose a positive integer $L<M$ as a parameter, called the \emph{floor height}. We divide $M$ by $L$ and get
\begin{equation}\label{10}
M=L(Q-1)+r,0 < r \le L.
\end{equation}
It is notable that this division rounds up to get the quotient $Q$. Then we divide the row set $F=\{1,2,...,M\}$ of $G$ into $Q$ subsets \(F=\mathop \cup \limits_{j =  1 }^{Q}  {F_t^j}\), which represents one way of dividing the matrix graph into $Q$ slender layers. We call each subset $F_t^j$ a \emph{floor} and call this set division the $t$th \emph{floor division}. For example, for $t=0$,
\begin{equation}\label{11}
\begin{split}
&F_0^j=\{L(j-1)+1,L(j-1)+2,...,Lj\},j=1,2,...,Q-1,\\
&F_0^Q=\{L(Q-1)+1,L(Q-1)+2,...,M\}.
\end{split}
\end{equation}
This division is like dividing a mansion of height $M$ into $Q$ floors. In a floor division, the first $Q-1$ floors have $L$ rows while the last one has $r\le L$ rows. Fig.~\ref{fig2} shows 4 floor divisions. In each floor, one row might be defined as a \emph{marginal row}, so that if all marginal rows in one floor division are eliminated, the remaining rows in different floors become non-adjacent. Thus, the MWIS can be found in all non-marginal rows by searching for the MWIS in each floor excluding the marginal row. A \emph{floor division scheme} is a group of different floor divisions. The following lemma ensures the existence of a floor division scheme that makes each row being the marginal row exactly once.\\
\begin{lemma}\label{fl_div_lmm}
For a given $M$-by-$N$ matrix graph and a floor height $L<M$, a floor division scheme with $L$ floor divisions can be constructed, with the $t$th division written as $\{F_t^j\}_{j=1}^Q$. Each $F_t^j$ contains at most one marginal row $m(t,j)$, s.t.\\
(i) Each division $t$ divides $G$ into $Q$ subgraphs which are only adjacent on marginal rows;\\
(ii) All marginal rows constitute $F=\{1,2,...,M\}$.
\end{lemma}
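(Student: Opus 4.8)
The plan is to exhibit an explicit floor division scheme by cyclically shifting the floor boundaries, and then to verify properties (i) and (ii) by reducing each to one elementary fact. For $t\in\{1,2,\dots,L\}$, I would declare row $m$ to be a marginal row of the $t$th division exactly when $m\equiv t\pmod{L}$ (residues taken in $\{1,\dots,L\}$), and then take the floors $F_t^j$ to be the maximal blocks of consecutive rows delimited by these marginal rows: the leading block $\{1,\dots,t\}$, the full blocks $\{t+(j-2)L+1,\dots,t+(j-1)L\}$, and a terminal block running up to row $M$. By construction every block except the terminal one ends at its unique marginal row, so each $F_t^j$ contains at most one marginal row $m(t,j)$, and each block has height at most $L$, consistent with the division in~\eqref{10}.

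Property (ii) is then immediate from elementary number theory: the $L$ residue classes modulo $L$ partition $\{1,2,\dots,M\}$, so every row $m$ satisfies $m\equiv t\pmod{L}$ for exactly one $t\in\{1,\dots,L\}$ and is therefore a marginal row in exactly one division. This simultaneously certifies that the scheme uses exactly $L$ divisions and that the marginal rows, collected over all divisions, constitute $F=\{1,\dots,M\}$ with no row repeated.

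For property (i) I would invoke the defining adjacency constraint~\eqref{5} of a matrix graph: two cells lying in rows $m_1$ and $m_2$ can be joined by an edge only if $|m_1-m_2|\le 1$. Within a fixed division $t$, consecutive floors meet along a single marginal row (the last row of the lower floor). Deleting that marginal row leaves the surviving rows of the two floors separated by a vertical gap of at least two, so by~\eqref{5} no edge can connect them. Hence every inter-floor edge is incident to a marginal row, the floors become pairwise non-adjacent once their marginal rows are removed, and each floor collapses to a one-dimensional matrix graph of height at most $L$---exactly the structure that Lemma~\ref{lmm_DP} solves in linear time.

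The main obstacle is the boundary bookkeeping rather than the interior argument. The leading floor $\{1,\dots,t\}$ and the terminal floor are partial, and depending on whether the last marginal row coincides with row $M$ the number of non-empty floors is either $Q$ or $Q+1$; one must check that the terminal floor (which carries no marginal row) and the short leading floor still respect the ``at most one marginal row'' bound and do not break the separation argument at the two ends of the mansion. These off-by-one end effects, together with padding the floor list to the uniform length $Q$ claimed in the statement, are the only delicate points; the separation and covering properties in the interior follow directly from the two elementary facts above.
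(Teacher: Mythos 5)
Your proposal is correct and is essentially the paper's own construction in Appendix~\ref{floor_division_scheme}: your marginal rows $m\equiv t\pmod{L}$ coincide (up to re-indexing $t$ and placing the marginal row at the bottom rather than the top of each floor) with the paper's $m(t,j)=t+L(j-1)+1$, and both arguments verify (ii) via the mod-$L$ residue-class partition and (i) via the adjacency constraint~\eqref{5}. The one loose end you flag is real but is resolved by merging rather than padding: for $t<r$ your scheme yields $Q+1$ blocks, and the paper's modulo-$M$ cyclic shift in~\eqref{27} simply fuses your leading block $\{1,\dots,t\}$ with the terminal block into the single wrapped floor $F_t^Q$ of total height $r\le L$, which contains exactly one marginal row and whose two pieces are mutually non-adjacent, so the separation argument and the height bound $L-1$ on non-marginal rows survive intact.
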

\begin{proof}
See Appendix C.
\end{proof}
The main idea is shown in Fig.~\ref{fig2}. A cyclic construction scheme is utilized to make the required floor division scheme. Lemma~\ref{fl_div_lmm} suggests that the entire graph $G$ can be divided into $Q$ subgraphs in $L$ different ways. Since each subgraph has a height bounded by $L$, we can view each one as a one-dimensional matrix graph and completely solve it with linear complexity according to results of Lemma~\ref{lmm_DP}. The aim of concurrently dividing $L$ copies is to ensure the property 2 of Lemma~\ref{fl_div_lmm} and Theorem 2.\\
\subsubsection{MGC Algorithm 1}
Assume a matrix graph $G=(V,E)$ is going to be colored with a color pool \(\Lambda=\{1,2,...,C\}\). The way to solve it is to assign each color $c$ to an MWIS. In order to find the MWIS in an $M$-by-$N$ graph, the floor dividing can be utilized to cut $L$ copies of $G$ into subgraphs and on each subgraph the MWIS can be obtained. Based on this idea, Algorithm 1 is given. Its performance is guaranteed by Theorem 2 in the next subsection. In the algorithm, the approximation scheme is that instead of searching for MWIS in a whole subgraph, we find MWIS of each subgraph excluding the marginal row. Since marginal rows are between non-adjacent MWISs and contains no vertices in the MWISs, additional vertices can be added in to stuff the marginal rows. A larger floor height $L$ can result in a higher complexity, but a more accurate approximation.
\begin{algorithm}\caption{Solving MGC problem}\label{alg:B}
\textbf{Input}: A matrix graph $G=(V,E)$, a color pool $\Lambda$, vertex weight $\mathbf{w}$ and color weight $\mathbf{\mu}$\\
\textbf{Output}: A matrix graph Coloring \(\mathscr{C}=(\mathscr{C}(v_1,c))\) which optimizes \(\bar{f}\) in ~\eqref{7} to $1-1/L$ of the optimal value.\\
\\
Initialize\\
\\
\\\textbf{/*Floor Dividing*/}\\
Calculate the floor dividing scheme $F_t^j$, $\forall t\in\{0,1,...,$ $L-1\},\forall j\in\{1,2,...,Q\}$ based on Lemma~\ref{fl_div_lmm};\\
\\
\\\textbf{/*MWIS for each color*/}\\
\textbf{FOR} each color $c\in \Lambda$\\
${\;\;\;\;}$Solve a MWIS problem in $G$ associated with vertex\\
${\;\;\;\;}$weights \(\mathbf{u}=(u_{m,n}^{i})\) defined as
\begin{equation}\label{12}
u_{m,n}^{i}=w_{m,n}^i \mu(v_{m,n}^i,c),\forall m,n,i
\end{equation}

\({\;\;\;\;\;\;}\)\textbf{FOR} each floor division \(t\) from \(0\) to \(L-1\)\\
\\
\({\;\;\;\;\;\;}\)\({\;\;\;\;\;\;}\)\textbf{/*MWIS for each one-dimensional graph*/}\\
\({\;\;\;\;\;\;}\)\({\;\;\;\;\;\;}\)\textbf{FOR} each floor $j \in \{1,2,...,Q\}$\\
\({\;\;\;\;\;\;}\)\({\;\;\;\;\;\;}\)\({\;\;\;\;\;\;}\)Set $\bar{F}_t^j=F_t^j\setminus$the marginal row;\\
\({\;\;\;\;\;\;}\)\({\;\;\;\;\;\;}\)\({\;\;\;\;\;\;}\)View all rows that have index $m\in \bar{F}_t^j$ as a
\({\;\;\;\;\;\;}\)\({\;\;\;\;\;\;}\)\({\;\;\;\;\;\;}\)one-dimensional matrix graph $\bar{G}_t^j$;\\
\({\;\;\;\;\;\;}\)\({\;\;\;\;\;\;}\)\({\;\;\;\;\;\;}\)Use Algorithm 2 to find a MWIS $\bar{S}_t^j$ in one-\\
\({\;\;\;\;\;\;}\)\({\;\;\;\;\;\;}\)\({\;\;\;\;\;\;}\)dimensional Graph $\bar{G}_t^j$ with no extra constraints;\\
\({\;\;\;\;\;\;}\)\({\;\;\;\;\;\;}\)\textbf{END}\\
\\
\({\;\;\;\;\;\;}\)\({\;\;\;\;\;\;}\)\textbf{/*MWIS for each marginal row*/}\\
\({\;\;\;\;\;\;}\)\({\;\;\;\;\;\;}\)\textbf{FOR} each floor $j \in \{1,2,...,Q\}$\\
\({\;\;\;\;\;\;}\)\({\;\;\;\;\;\;}\)\({\;\;\;\;\;\;}\)View the marginal row in $F_t^j$ as a one-dimensional\\
\({\;\;\;\;\;\;}\)\({\;\;\;\;\;\;}\)\({\;\;\;\;\;\;}\)Matrix graph $\tilde{G}_t^j$ and use Algorithm 2 to find a\\
\({\;\;\;\;\;\;}\)\({\;\;\;\;\;\;}\)\({\;\;\;\;\;\;}\)MWIS $\tilde{S}_t^j$ in it with extra constraints induced by\\ \({\;\;\;\;\;\;}\)\({\;\;\;\;\;\;}\)\({\;\;\;\;\;\;}\)$\bar{S}_t^j$ and $\bar{S}_t^{j-1}$;\\
\({\;\;\;\;\;\;}\)\({\;\;\;\;\;\;}\)\({\;\;\;\;\;\;}\)Set $S_t^j=\bar{S}_t^j \bigcup \tilde{S}_t^j$;\\
\({\;\;\;\;\;\;}\)\({\;\;\;\;\;\;}\)\textbf{END}\\
\\
\({\;\;\;\;\;\;}\)\({\;\;\;\;\;\;}\)\textbf{/*Combine all one-dimensional MWIS*/}\\
\({\;\;\;\;\;\;}\)\({\;\;\;\;\;\;}\)Form a set $S_t=\mathop \bigcup \limits_{j=1}^Q  S_t^j$.\\
\({\;\;\;\;\;\;}\)\textbf{END}\\

\({\;\;\;\;\;\;}\)Choose $S_{c}\in\{S_0,S_1,...,S_{M-1}\}$ that has the maximum \\
\({\;\;\;\;\;\;}\)normalized weighted cardinality.\\
\\
${\;\;\;\;\;\;}$\textbf{/*Assign $c$ to $S_{c}$*/}\\
${\;\;\;\;\;\;}$Use the indicator form \(\mathbf{q}=(q_{m,n}^{i})\) to represent \\
${\;\;\;\;\;\;}$$S_{c}$ and set
\begin{equation}\label{13}
\mathscr{C}(v_{m,n}^i,c)=q_{m,n}^{i},\forall m,n,i
\end{equation}
\textbf{END}\\
\\
Output the solution $\mathscr{C}$.
\end{algorithm}

\subsection{Reuse Ratio Lower Bound and Complexity Analysis}\label{Analysis}
\vspace{0.2cm}
In this subsection, we present the Theorem 2 which analyses the performance of Algorithm 1.\\
\begin{theorem}
Let $\mathscr{C}^*$ be the exact solution for the matrix graph Coloring (MGC) problem in the matrix graph $G$ and let $\bar{f}_m$ be the corresponding maximum weighted reuse ratio. Then Algorithm 1 obtains an approximate solution $S$ with complexity $\mathcal{O}(CK^{L-1}MN)$. Furthermore, the corresponding weighted reuse ratio $\bar{f}$ satisfies
\begin{equation}\label{14}
\bar{f}>\bar{f}_m\cdot \frac{L-1}{L},
\end{equation}
where $C$ is the number of colors. $L$ is the floor height designed beforehand. $K= \max \limits_{m,n} K_{m,n}$ and $K_{m,n}$ denotes the number of independent sets in $G_{m,n}$. \\
\end{theorem}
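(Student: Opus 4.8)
The plan is to split the statement into the approximation ratio and the complexity count, and to establish the ratio by an averaging argument across the $L$ floor divisions. I would first decouple the objective across colors: substituting~\eqref{6} into~\eqref{7} and exchanging the order of summation gives
\begin{equation*}
\bar f=\frac{1}{CW}\sum_{c=1}^{C}\sum_{m,n,i}w_{m,n}^{i}\mu(v_{m,n}^{i},c)\,\mathscr{C}(v_{m,n}^{i},c),\qquad W:=\sum_{m,n,i}w_{m,n}^{i}.
\end{equation*}
For fixed $c$ the set $S_c=\{v:\mathscr{C}(v,c)=1\}$ is independent by the constraint in~\eqref{7}, and with the weights $u_{m,n}^{i}$ of~\eqref{12} its contribution equals $U_c|S_c|_N$, where $U_c:=\sum_{m,n,i}u_{m,n}^{i}$; hence $\bar f=\frac{1}{CW}\sum_c U_c|S_c|_N$. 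Because no constraint couples distinct colors, the exact optimum assigns to each $c$ its true MWIS $S_c^{*}$, so $\bar f_m=\frac{1}{CW}\sum_c U_c|S_c^{*}|_N$. It then suffices to prove, color by color, that Algorithm~1 returns a set $S_c$ with $|S_c|_N>\frac{L-1}{L}|S_c^{*}|_N$.

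This per-color estimate is the core of the proof. Fixing $c$ and $S_c^{*}$, for each division $t$ I would delete from $S_c^{*}$ the vertices on the marginal rows of $t$; the remainder is an independent set on the non-marginal rows, which by Lemma~\ref{fl_div_lmm}(i) split into mutually non-adjacent floors and thus form an admissible input to the per-floor one-dimensional subproblems. The weight thrown away in division $t$ is that of $S_c^{*}$ on the marginal rows of $t$, and summing over all $L$ divisions---using Lemma~\ref{fl_div_lmm}(ii), by which every row is marginal in exactly one division---these discarded weights total the full weight $U_c|S_c^{*}|_N$. Averaging, some division $t^{*}$ discards at most a fraction $\frac{1}{L}$, so the restriction of $S_c^{*}$ to the non-marginal rows of $t^{*}$ keeps weight at least $\frac{L-1}{L}U_c|S_c^{*}|_N$. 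Since Algorithm~1 computes, via Lemma~\ref{lmm_DP}, the \emph{exact} MWIS on each non-marginal one-dimensional subgraph $\bar G_{t^{*}}^{j}$, its non-marginal solution dominates this restriction; the marginal-row stuffing only adds weight, and the final choice of the division with largest normalized weighted cardinality yields $|S_c|_N>\frac{L-1}{L}|S_c^{*}|_N$. Multiplying by $U_c/(CW)$ and summing over $c$ then gives~\eqref{14}.

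The complexity is a matter of bookkeeping. For each of the $C$ colors and each of the $L$ divisions, the $Q=\lceil M/L\rceil$ floors are solved separately; after its marginal row is removed, a floor is a one-dimensional matrix graph of width $N$ whose super-cells are columns of $L-1$ ordinary cells, each carrying at most $K^{L-1}$ independent sets, so Lemma~\ref{lmm_DP} solves it in $\mathcal{O}(K^{L-1}N)$. Hence the total cost is $\mathcal{O}\!\left(C\cdot L\cdot\frac{M}{L}\cdot K^{L-1}N\right)=\mathcal{O}(CK^{L-1}MN)$, the $Q$ marginal-row passes per division contributing only $\mathcal{O}(CKMN)$, which is absorbed.

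The delicate step, and the one I expect to demand the most care, is the averaging argument: I must verify that restricting $S_c^{*}$ to the non-marginal rows genuinely produces a feasible input to the floor subproblems---so that the exactness of the one-dimensional optima lets the algorithm dominate it---and that the accounting of discarded weight across the $L$ divisions invokes Lemma~\ref{fl_div_lmm}(ii) precisely once per row, with no double counting at floor boundaries and a clean justification of the strict inequality.
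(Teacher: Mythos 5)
Your proposal is correct and follows essentially the same route as the paper's proof: the identical per-color decoupling of~\eqref{7} via the weights~\eqref{12}, the same floor-division averaging over the $L$ divisions using Lemma~\ref{fl_div_lmm} (your pigeonhole on the weight of $S_c^{*}$ discarded on marginal rows is just the paper's sum-over-$t$ inequality~\eqref{17} read from the other side), and the same $\mathcal{O}(C\cdot L\cdot Q\cdot K^{L-1}N)=\mathcal{O}(CK^{L-1}MN)$ accounting. The feasibility point you flag---that the restriction of $S_c^{*}$ to non-marginal rows is dominated floor-by-floor by the exact one-dimensional MWIS, since Lemma~\ref{fl_div_lmm}(i) makes the floors mutually non-adjacent---is exactly how the paper justifies the key inequality, so no gap remains.
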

\begin{proof}
The proof will be divided into three parts. We first show that proving~\eqref{14} can be decomposed into proving the corresponding inequality for each color $c$. Then we prove that the floor division scheme can ensure the inequality for each color $c$. Finally we analyze the computation complexity.\\

To decompose~\eqref{14}, we plug~\eqref{6} into~\eqref{7} and get
\[
\bar f =\frac{\mathop \sum \limits_{v\in V}w_v\cdot\frac{1}{{  C}}\mathop \sum \limits_{c  = 1}^{  C} {\mathscr{C}(v,c )\mu(v,c )}}{\mathop \sum \limits_{v\in V}w_v}.
\]
Then we change the summation order of the numerator and arrive at
\begin{equation}\label{15}
\bar f
=\frac{1}{{  C}}\mathop \sum \limits_{c  = 1}^{  C}[\frac{{ \mathop \sum \limits_{v\in V} w_v} \mu(v,c )}{{\mathop \sum \limits_{v\in V} w_v}} \cdot B_{c}],
\end{equation}
where
\begin{equation}\label{16}
B_{c} =\frac{\mathop \sum \limits_{v\in V} \mathscr{C}(v,c ) w_v {\mu(v,c )}}{ \mathop \sum \limits_{v\in V} w_v \mu(v,c)}.
\end{equation}
For each fixed $c\in \Lambda$, $B_{c}$ is only determined by $\mathscr{C}(v,c),v\in V$, i.e., how this specific color $c$ is assigned to the vertices in $G$. Therefore, optimizing $B_{c}$ has nothing to do with other color assignments. If we use a set \(S_c \subset V\) to denote the vertex set such that $\mathscr{C}(v,c)=1$ and we define weights as~\eqref{12}, then it is easily seen that $B_{c}$ is the normalized weighted cardinality of \(S_c\). Thus, optimizing $\bar f$ in~\eqref{15} can be decomposed into $C$ subproblems and each of them regards maximizing a specific $B_{c}$ by finding a specific MWIS \(S_c\). Then we assign each $c$ to \(S_c\) like~\eqref{13}. As long as we get the approximate MWIS \(S_c\) with a performance guarantee $1-1/L$, we can conclude that~\eqref{14} holds.\\

We next claim that the floor division scheme indeed yields $B_c=|S_c|_N>(1-1/L)|S^*_c|_N$. Define $\mathbf{q}=(q_{m,n}^i)$ as the indicator from of $S_{c}^*$, the MWIS of $G$ with the vertex weights defined as~\eqref{12}. In the following we compare the normalized cardinality of $S_{c}^*$ to the floor-division-based approximate solution $S^{c}$ by induction.\\

As shown in Appendix~\ref{floor_division_scheme}, we have got the floor division scheme $\{F_t^j\}$ beforehand, where $t$ is from $0$ to $L-1$ and $j$ is from $1$ to $Q$. Deleting the marginal row $m(t,j)$ in each floor $F_t^j$, we get a one-dimensional matrix graph $\bar{G}_t^j$ with the row set $\bar{F}_t^j=F_t^j\setminus m(t,j)$, and we can use Algorithm 2 in the Appendix~\ref{Dynamic_Programming} to obtain an exact MWIS solution $\bar{S}_t^j$. We denote this solution in an indicator form $\mathbf{\bar{\theta}}=(\bar{\theta}_{m,n}^i)$. By definition of the MWIS, $\bar{S}_t^j$ must have a larger normalized weighted cardinality than any other independent sets. Recall that $\mathbf{q}$ is the indicator form of $S_{c}^*$, we have, for each $\{F_t^j\}$, that
\[
\sum\limits_{m \in \overline F _t^j} {\sum\limits_{n = 1}^N {\sum\limits_{i = 1}^{{l_{m,n}}} {u_{m,n}^i\bar{\theta} _{m,n}^i} } }\ge \sum\limits_{m \in \overline F _t^j} {\sum\limits_{n = 1}^N {\sum\limits_{i = 1}^{{l_{m,n}}} {u_{m,n}^i} } } q_{m,n}^i.
\]
Summing up the above inequality for all floors $j \in \{1,...,Q\}$ in the $t$-th floor division, we obtain
\[\sum\limits_{j = 1}^Q {\sum\limits_{m \in \overline F _t^j} {\sum\limits_{n = 1}^N {\sum\limits_{i = 1}^{{l_{m,n}}} {u_{m,n}^i} } } } \bar{\theta} _{m,n}^i \ge \sum\limits_{j = 1}^Q {\sum\limits_{m \in \overline F _t^j} {\sum\limits_{n = 1}^N {\sum\limits_{i = 1}^{{l_{m,n}}} {u_{m,n}^i} } } q_{m,n}^i}. \]
Defining $\bar{S}_t=\mathop \bigcup \limits_{j=1}^Q  \bar{S}_t^j$, we have
\[
\begin{split}
|{\overline S _t}|_N \ge &\frac{{\sum\limits_{j = 1}^Q {\sum\limits_{m \in \overline F _t^j} {\sum\limits_{n = 1}^N {\sum\limits_{i = 1}^{{l_{m,n}}} {u_{m,n}^i} } } q_{m,n}^i} }}{{\sum\limits_{j = 1}^Q {\sum\limits_{m \in \overline F _t^j} {\sum\limits_{n = 1}^N {\sum\limits_{i = 1}^{{l_{m,n}}} {u_{m,n}^i} } } } }}\\
> &\frac{1}{\Sigma }\sum\limits_{j = 1}^Q {\sum\limits_{m \in \overline F _t^j} {\sum\limits_{n = 1}^N {\sum\limits_{i = 1}^{{l_{m,n}}} {u_{m,n}^i} } } q_{m,n}^i},
\end{split}
\]
where
\[
\Sigma  = \sum\limits_{m = 1}^M {\sum\limits_{n = 1}^N {\sum\limits_{i = 1}^{{l_{m,n}}} {u_{m,n}^i} } }.
\]
After adding new nodes in $\bar{S}_t$, we get an independent set $S_t$ with larger normalized cardinality, thus we have
\[|S_t|_N \cdot \Sigma  > |{\overline S _t}|_N\cdot \Sigma >\sum\limits_{j = 1}^Q {\sum\limits_{m \in \overline F _t^j} {\sum\limits_{n = 1}^N {\sum\limits_{i = 1}^{{l_{m,n}}} {u_{m,n}^i} } } q_{m,n}^i}. \]
Lemma~\ref{fl_div_lmm} ensures that each row $j$ appears in exactly $L-1$ different floor divisions (except being the marginal row only once), so if we sum the above equation for all $t$, we arrive at
\[
\begin{split}
\sum\limits_{t = 0}^{L - 1} {|S_t|_N \cdot \Sigma }  \ge& \sum\limits_{t = 0}^{L - 1} {\sum\limits_{j = 1}^Q {\sum\limits_{m \in \overline F _t^j} {\sum\limits_{n = 1}^N {\sum\limits_{i = 1}^{{l_{m,n}}} {u_{m,n}^i} } } q_{m,n}^i} }\\
=& (L - 1)\sum\limits_{m = 1}^M {\sum\limits_{n = 1}^N {\sum\limits_{i = 1}^{{l_{m,n}}} {u_{m,n}^iq_{m,n}^i} } }\\
= &(L - 1)|{S^*}|_N \cdot \Sigma.
\end{split}
\]
Dividing both sides with $\Sigma L$ yields
\begin{equation}\label{17}
\frac{1}{L}\sum\limits_{t = 0}^{L - 1} {|S_t|}_N  > \frac{{L - 1}}{L}|{S^*}|_N.
\end{equation}
If we choose $t=t^*$ s.t. $S_{t^*}$ has the largest normalized cardinality, we will have
\begin{equation}\label{18}
|S_{{t^*}}|_N > \frac{{L - 1}}{L}|{S_c^*}|_N.
\end{equation}
$S_{{t^*}}$ is exactly our approximate solution for $S_{c}$. Thus, we know that $B_{c}$ is guaranteed to obtain the $1-1/L$ of the optimal value. And based on~\eqref{15}, we know that~\eqref{14} holds.\\

The complexity scales like the following: For each $c$, we need to find the MWIS $S_c$, which is further decomposed into totally $QL$ subproblems. Each problem is solving the MWIS problem in a one-dimensional matrix graph. Based on Lemma~\ref{lmm_DP}, we can show that each problem will be completely solved with complexity $\mathcal{O}(K^{L-1}N)$. Therefore, the final problem will be solved in $\mathcal{O}(C QLK^{L-1}N)= \mathcal{O}(C K^{L-1}MN)$.\\

The complexity $\mathcal{O}(K^{L-1}N)$ is obtained like the following. In fact, each cell contains at most $K$ Independent Sets. Based on the IS decomposition~\eqref{9}, we know that if we view each $M_j$-by-$N$ subgraph as a one-dimensional matrix graph, then one big cell is constituted of $M_j$ cells vertically, and each big cell contains at most $K^{M_j}$ Independent Sets. We know from Lemma~\ref{fl_div_lmm} that $M_j<L-1$, thus, each sub-problem can be solved with complexity $\mathcal{O}(K^{L-1}N)$. \\ \end{proof}
\begin{remark}Theorem 2 characterizes the tradeoff between computation complexity and efficiency that we can get, which forms a theoretical foundation to get the performance-guaranteed coloring scheme in a matrix graph. We have made the statement that matrix graphs are especially computing-efficient for small cell graphs. Now it is supported here. Since $K$ is a very small number, $\mathcal{O}(K^L)$ will not be especially large if the floor height $L$ is not that large. Moreover, if inter-cell interferences are high, the complexity $\mathcal{O}(K^L)$ further shrinks due to the branch trimming in finding one-dimensional MWISs (The Dynamic Programming in Lemma~\ref{lmm_DP}). In practice, if we choose $L=5$, then based on Theorem 2, we can get a performance guaranteed to be better than $1-1/5=80\%$ of the optimal one. Moreover, simulation results suggest that this lower bound is quite loose. Usually the performance reaches more than $95\%$. A tighter bound is our goal in the future.\\
\end{remark}

\begin{remark}
One might be concerned with the computational complexity which grows exponentially with the parameter $L$ to achieve the frequency allocation bound. However, this $(\mathcal{O}(\frac{L-1}{L}),\mathcal{O}(K^L))$ performance-complexity tradeoff is inevitable due to the NP-Completeness. In fact, if we get a $(\mathcal{O}(\frac{L-1}{L}),\mathcal{O}(L^\alpha))$ tradeoff in the MGC problem (which is defined in the complexity class FPTAS~\cite{21}) and $L$ could go to infinity, we can simply set $L$ to be the same as the number of vertices in the graph, set $C=1$ and set all weights to be 1, which finally yields an approximate maximum independent set solution that hits $1/L$ to the bound with polynomial complexity of the network size. However, since $L$ is the number of vertices, the smallest granularity of a Maximum Independent Set problem (specific MWIS problem when all weights are 1) is now $1/L$. Thus, the approximate solution is exactly the same as the optimal one. This contradicts with the general belief that in NP-complete problems, we cannot find any polynomial-time solution that achieves the bound. Nonetheless, one can still explore new ways to lower the base $K$ of $\mathcal{O}(K^L)$ in order to get the best exponential.
\end{remark}

\section{Simulations}\label{Simulations}
\vspace{0.2cm}
In this section, simulation results are obtained for large-scale small-cell networks. The test bed is set to be a randomly generated $M$-by-$N$ matrix graph as follows. We first generate a geometric random graph with a Poisson point process with density $\lambda$ and each two vertices are connected if their distance is within $2r$. Then we separate the graph into totally 12000 cells, with $M=60$ and $N=200$. In the first simulation we will change $N$ from 1 to 200 to view the convergence result. After that we set $N=200$ to view the performance variation with other parameters. No matter $N$ changes or not, $M$ and $N$ are set before generating the matrix graph. The expectation number of vertices in each cell will be $V_d=4\lambda r^2$, indicating the \emph{vertex density}. In order to view the performance under different interference intensity, each edge is erased with probability $1-E_d\in[0,1]$. The parameter $E_d$ is called the \emph{edge density}. Assume we have $C=6$ colors, which is the same setting in~\cite{16}. The color number does not affect the conclusion. In order to compare with other algorithms~\cite{15}\cite{16}, we simply set color weight $\mu(v,c)\in\{0,1\}$, which equals to 1 with probability $p_f$. Thus, the equivalent vertex density is actually  $V_d\cdot p_f$, because we never assign a color to communication links with 0 weights. In the following when we refer to vertex density, we actually refer to $V_d\cdot p_f$. After constructing the matrix graph, the Algorithm 1 is simulated and the frequency allocation scheme is obtained. The performance criterion is the weighted reuse ratio defined in~\eqref{7}. We assume all vertex weights are 1, which does not affect the simulation results. So, this criterion now just equals to the average ratio of frequency bands that is used by each communication link, which directly shows the resource reuse efficiency.\\
\begin{figure}
\centering
\includegraphics[scale=0.65]{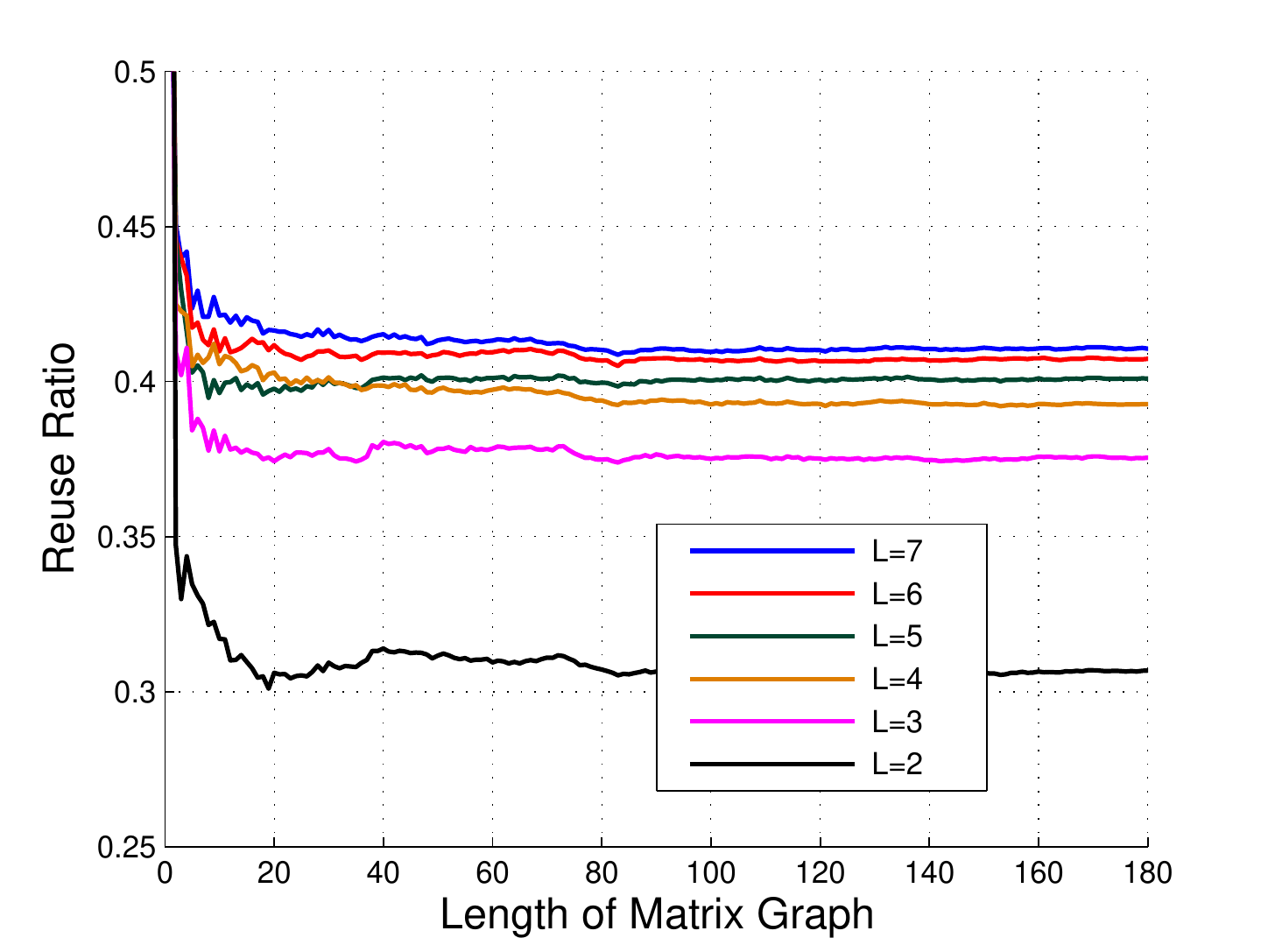}
\caption{{Convergence of the weighted reuse ratio when vertex density=1.6, edge density=0.6.}}\label{fig1，2}
\end{figure}
\begin{figure}
\centering
\includegraphics[scale=0.65]{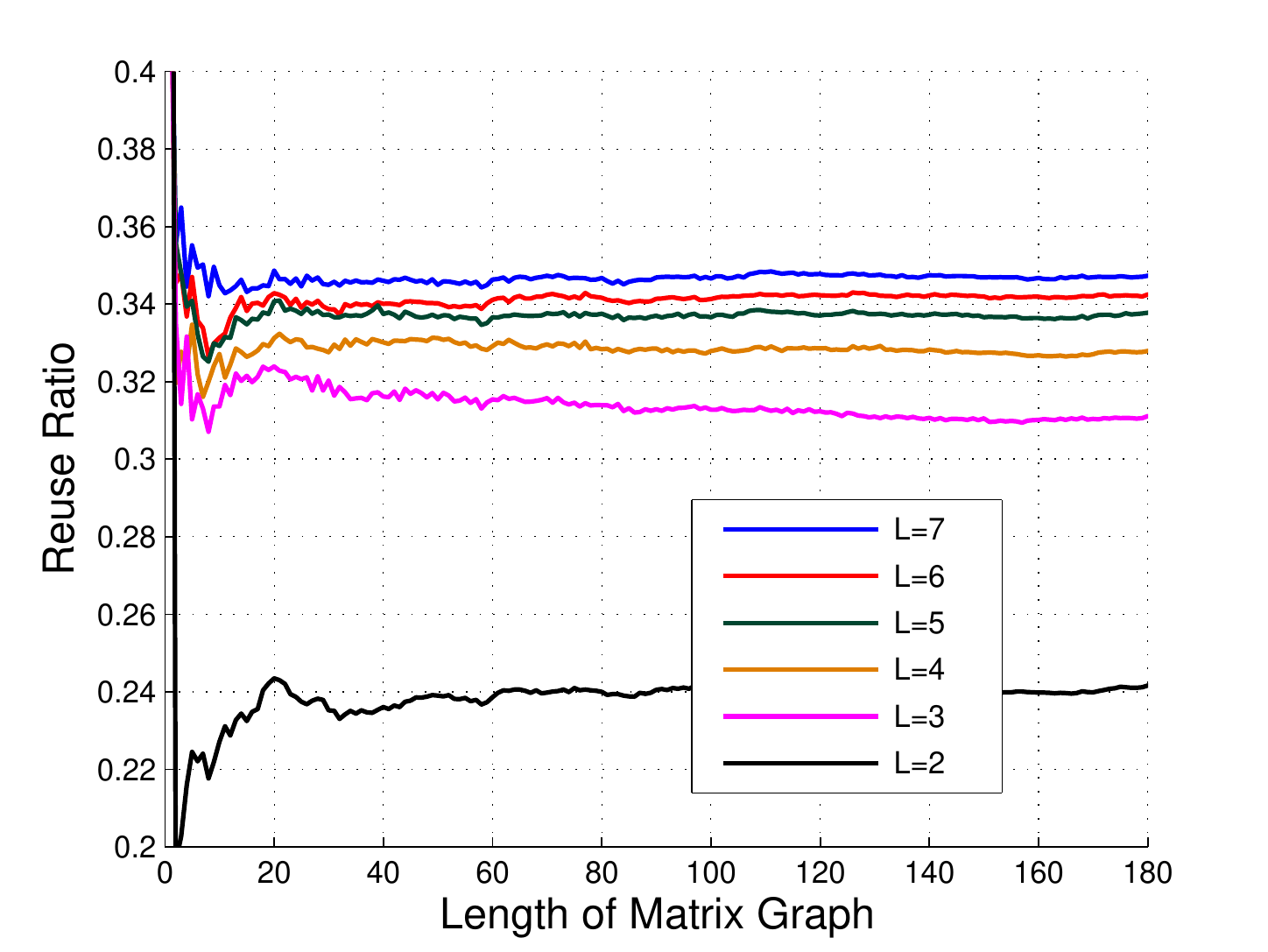}\vspace{-0.3cm}
\caption{{Convergence of the weighted reuse ratio when vertex density=1.6, edge density=0.8.}}\vspace{-0.5cm}\label{fig1，2}
\end{figure}

In Fig. 4 and 5, the horizontal axis is the length $N$ of the matrix graph. We set $M=60$ and changes $N$ from 1 to 200, while taking down the weighted reuse ratio obtained by Algorithm 1. In these two figures, the vertex density is set to be 1.6 and the edge density is 0.6 and 0.8 respectively. We find that when $N$ goes large, each curve converges to a constant value. For different curves (with different floor height $L$), all curves uniformly converge (simultaneously for each $N$) to a limit. This limit is the theoretical limit of frequency allocation.\\

In order to support Theorem 2 which says that the solution obtained by Algorithm 1 has at most a $1/L$ gap to the optimal solution, we illustrates the performance when the floor height $L$ goes large, under different vertex and edge densities (as shown in Fig. 3). It is clear that when L increases, each curve converges to a limit. Thus, we can approximately tell the theoretical limit of frequency allocation, despite the fact that telling the exact value has been proved to be NP-complete.\\

A more interesting result is that, when vertex density and edge density increases, this limit shrinks. This is intuitively right because as interference relationships become complicated, the available resources to be reused decreases. We conjecture that this limit, on a randomly generated large scale network, only depends on vertex density and edge density. A meaningful future work is to investigate this conjecture, which can ultimately tell the frequency reuse limit.\\
\begin{figure}
\centering
\includegraphics[scale=0.65]{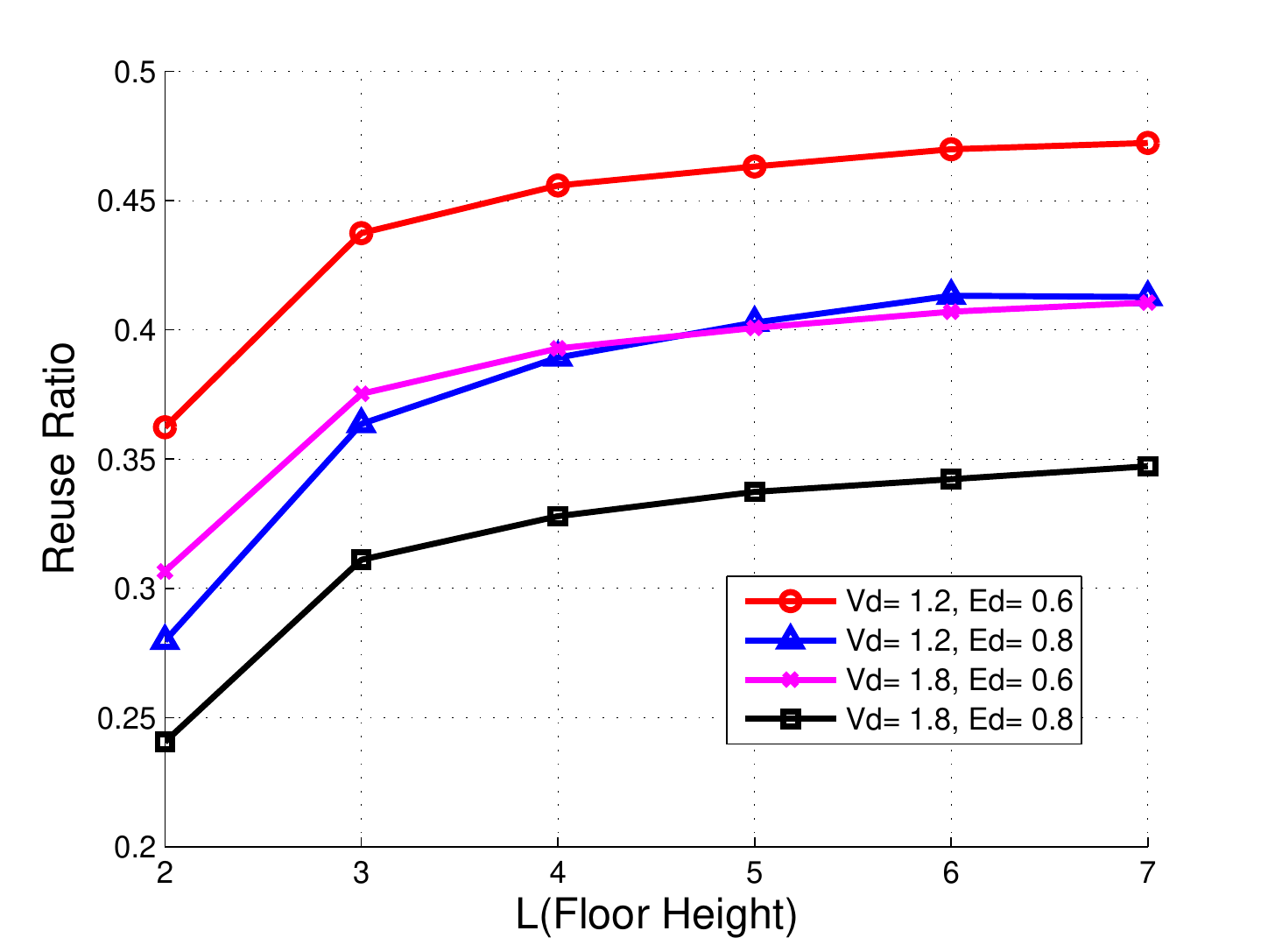}\vspace{-0.3cm}
\caption{{The reuse ratio converges to the optimal value when $L$ increases.}\vspace{-0.5cm}}\label{fig1，2}
\end{figure}

In Fig. 7 and 8 we show the performance comparison of the Algorithm 1 with three other algorithms. GB-DFR is a graph based heuristic proposed in~\cite{16}, which generalized the conception of saturation-degree graph coloring in~\cite{13} and got good performance in cellular system simulations. GLC is the Greedy List-Coloring proposed in~\cite{15}. It is simple and efficient. We find that our algorithm performs gradually better when edge density and vertex density increases. This is common since graph-based algorithms usually have good performance in degree-bounded graphs. But when interference become complicated, there is no guarantee that they perform well. By the way, after one color is assigned to a vertex, both GB-DFR and GLC have sorting in the whole network, which drives the complexity to $\mathcal{O}(MN\bar{f}C\cdot MN\log MN)$, where $\bar f$ is the weighted reuse ratio and $M$-by-$N$ is the network size. When network goes large, this becomes impractical. SFR is called Soft Frequency Reuse~\cite{16}, which uses different reuse factors in cell edge and cell center. In our matrix graph, we just consider the cell center to be vertices that do not interfere with the neighboring cells. Since SFR is essentially a grid-model algorithm, it does not perform quite well in our tests. However, when interference is quite large (edge density reaches 0.8), it has excellent performance. We suspect that this is because when edge density reaches some threshold, interference management schemes does not have much gain compared to interference avoidance schemes.\\
\begin{figure}
\centering
\includegraphics[scale=0.65]{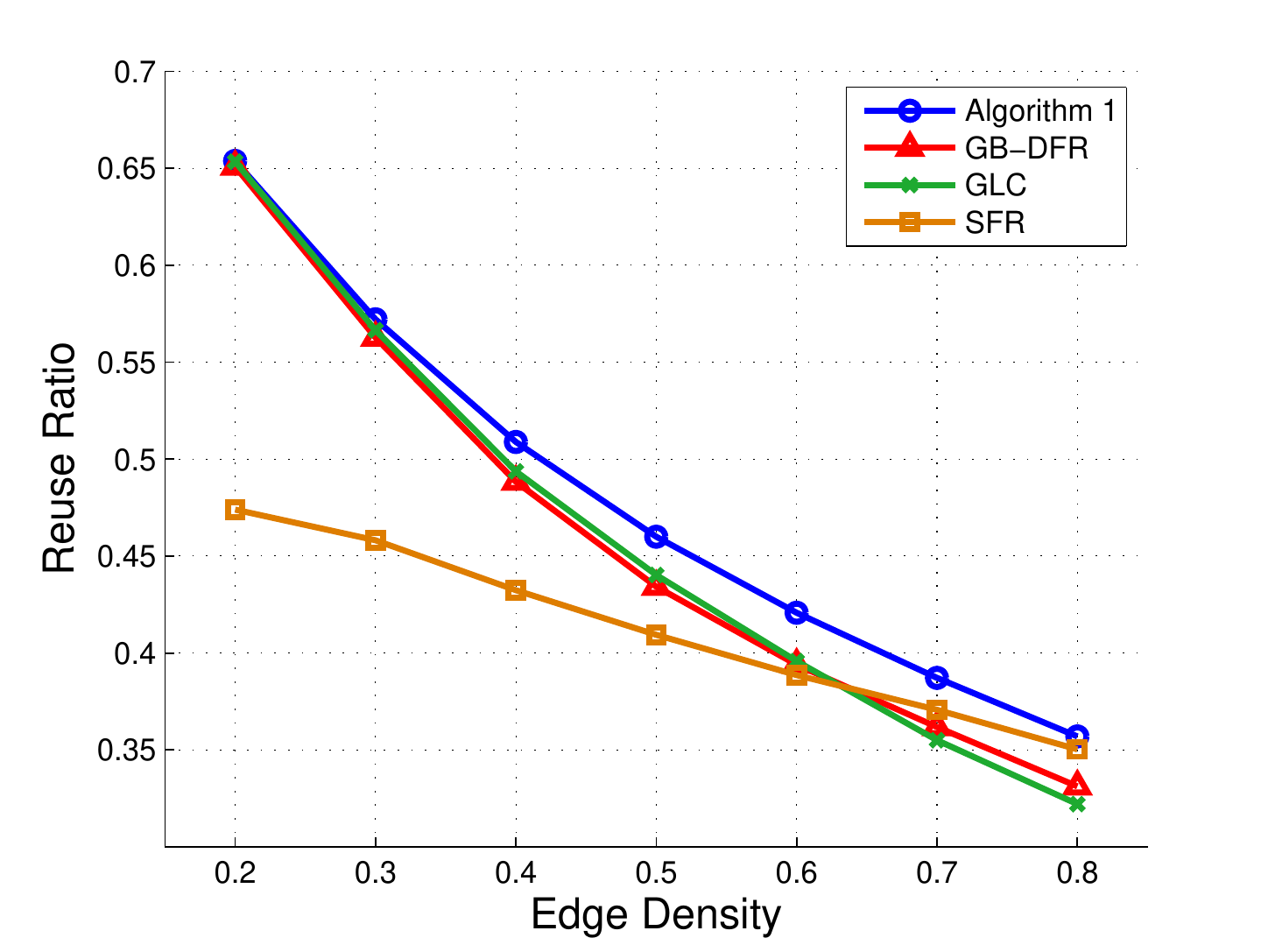}
\caption{{Performance Comparison of Algorithm 1 with other algorithms when vertex density=1.6}}\label{fig1，2}
\end{figure}
\begin{figure}
\centering
\includegraphics[scale=0.65]{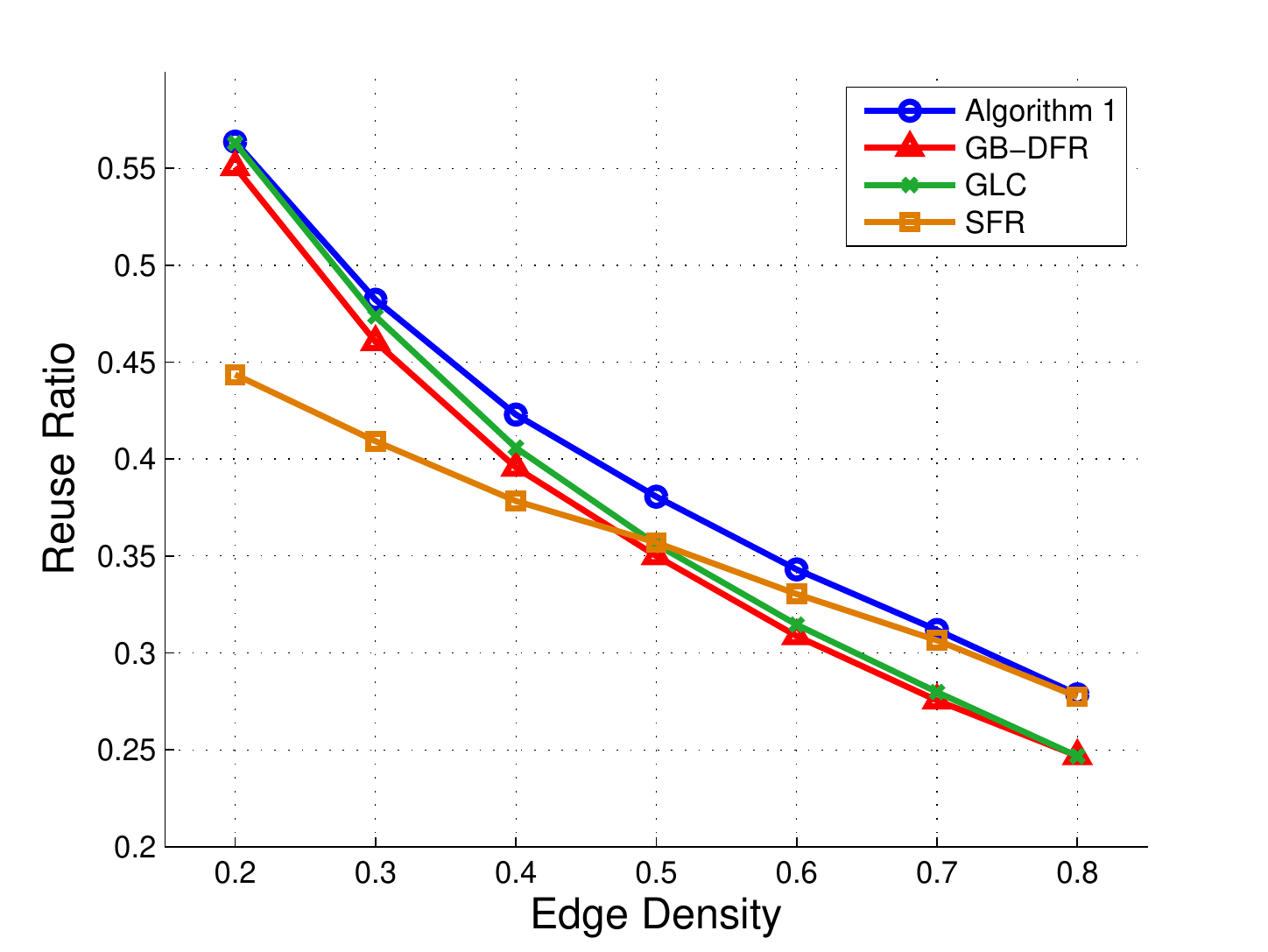}
\caption{{Performance Comparison of Algorithm 1 with other algorithms when vertex density=2.4}}\label{fig1，2}
\end{figure}
\vspace{-0.8cm}
\section{Conclusions}
In this paper we are focusing on the ultimate limit of frequency allocation in a 5G network. To study this problem, we proposed a matrix graph model and constructed an analytical framework combining matrix graph coloring (MGC) and maximum weighted independent set (MWIS), based on properties of large-scale small-cell networks. Utilizing this model, we obtain an approximation algorithm that achieves a bounded gap to the optimal performance with a complexity growing linearly with the network size, despite the NP-completeness of the MGC problem. Therefore, if we could build a proper matrix graph, we can find the nearly-optimal way to allocate resources like frequencies and time slots. This is in contrast with conventional graph-coloring based heuristics which usually have no guarantee on performance. Moreover, the proposed scheduling algorithm has lower computational complexity if cells are smaller and inter-cell interference are more complicated. Thus, we conclude that frequency allocation in high-interference small-cell networks can be carried out efficiently and the small-cell networks are indeed practical for the future 5G network construction. Although rich simulation results support our theories, we are still interested in further improving them. Since our simulations are carried out on random graphs, according to our observations, a random-graph analytical way to derive performance bound might exist. If this is the case, we could directly calculate the performance bound expectation regardless of the NP-completeness of finding a concrete coloring scheme, even without carrying out the approximation algorithms. At least, the bound of $(L-1)/L$ in Theorem 2 could be further tightened due to the law of large numbers in a random graph.\\
\appendices
\section{Proof of Lemma~\ref{SG_lmm}}\label{Graph_Partition_performance_loss}
Suppose we have excluded $E_c$ edges in the matrix graph construction. Then, the validity check will drop at most $E_c$ users for each frequency band. If we denote the optimal reuse ratio after matrix graph construction as $\bar{f}_m^{*}$, then it holds that
\[\bar{f}_m>\bar{f}_m^{*}-\frac{E_c}{|V|}.\]
It is clear that $\bar{f}_m^{*}>\bar{f}^{*}$ because dropping edges (lowering interference) can only increase reuse ratio, so we have
\[\bar{f}_m>\bar{f}^{*}-\frac{E_c}{|V|}.\]
Thus, in order to prove~\eqref{Drop_Edge_optimal}, it suffices to show that
\begin{equation}\label{Drop_Edge}
  E[\frac{E_c}{|V|}]<\frac{1}{2}\lambda \int_{\mathbf{x}\in\Omega}g(\mathbf{x})d\mathbf{x}.
\end{equation}
Define the number of excluded edges connecting node $v_i$ by $E_i$, and denote $|V|$ by $N$, then conditioning on $N=n$, it follows that
\begin{equation}\label{Derive1}
\begin{split}
E[\frac{E_c}{|V|}]=&E\{ E[\frac{E_c}{N}|N=n] \}\\
\overset{(a)}{=}&E\{\frac{1}{2N}\sum_{i=1}^{n}E[E_i|N=n]\},
\end{split}
\end{equation}
where (a) is true because each excluded edge is counted twice in the enumeration. For a specific vertex $v_i$, it holds that
\[E_i=\sum_{j=1,j\neq i}^n X_{ij},\]
where
\[
X_{ij} = \left\{ {\begin{array}{*{20}{c}}
{1,}\\
{0,}
\end{array}\begin{array}{*{20}{c}}
\text{($v_i,v_j)\in E$ but located in non-adjacent cells,}\\
\text{otherwise.}
\end{array}} \right.
\]
It follows that
\[E[E_i|N=n]=\sum_{j=1,j\neq i}^n \Pr[X_{ij}=1|N=n].\]
For a Poisson point process with number of vertices fixed to be $n$, the $n$ vertices will follow the independent identical uniform distribution over the whole rectangular area. Thus,~\eqref{Derive1} can be further simplified to
\begin{equation}\label{Derive2}
\begin{split}
E[\frac{E_c}{|V|}]=&E\{\frac{1}{2N}\sum_{i=1}^{n}\sum_{j=1,j\neq i}^n \Pr[X_{ij}=1|N=n]\}\\
\overset{(a)}{=}&E\{\frac{N-1}{2}\Pr[X_{12}=1|N=n]\}.
\end{split}
\end{equation}
The equality (a) follows from that $X_{i,j},\forall i\neq j$ are identically distributed. Therefore, we can focus on two specific nodes $v_1$ and $v_2$ and look for an upper bound for $\Pr[X_{12}=1|N=n]$. Since $X_{12}=1$ only when $v_i$ and $v_j$ are in non-adjacent cells, $X_{12}=0$ surely if horizontal distance or vertical distance between $v_1$ and $v_2$ are both smaller than $a$. Suppose $v_1$ is located at $\mathbf{x}_1=(x_1^1,x_1^2)$ and $v_2$ is located at $\mathbf{x}_2=(x_2^1,x_2^2)$ and $\mathbf{x}_1,\mathbf{x}_2\in \Omega_0$, which is the rectangular area shown in Fig.~\ref{m_graph}a. Then the probability can be written as
\[\Pr[X_{12}=1|N=n]<\int_{\Omega_1}\frac{1}{|\Omega_0|^2}g(\mathbf{x_1-x_2})d\mathbf{x_1}d\mathbf{x_2},\]
where $\Omega_1=\{(\mathbf{x_1},\mathbf{x_2})\in \Omega_0^2||x_1^1-x_2^1|>a$ or $|x_1^2-x_2^2|>a\}$ and $|\Omega_0|$ is the area of $\Omega_0$. Replacing variable $\mathbf{x}_2$ with $\mathbf{y}_2=\mathbf{x}_2-\mathbf{x}_1$, we get
\[\Pr[X_{12}=1|N=n]<\int_{\Omega_2}\frac{1}{|\Omega_0|^2}g(\mathbf{y_2})d\mathbf{x_1}d\mathbf{y_2},\]
where
\[\begin{split}
\Omega_2=&\{(\mathbf{x_1},\mathbf{y_2})\in \Omega_0\times\mathbb{R}^2|\mathbf{y_2}\in \Omega_0+\mathbf{x_1},|y_2^1|>a\text{ or }|y_2^2|>a\}\\
\subset&\{(\mathbf{x_1},\mathbf{y_2})\in \Omega_0\times\mathbb{R}^2||y_2^1|>a\text{ or }|y_2^2|>a\}=\Omega_0\times\Omega.
\end{split}\]
Thus, we have
\[\begin{split}
\Pr[X_{12}=1|N=n]<&\int_{\Omega_0\times\Omega}\frac{1}{|\Omega_0|^2}g(\mathbf{y_2})d\mathbf{x_1}d\mathbf{y_2}\\
=&\int_{\Omega}\frac{1}{|\Omega_0|}g(\mathbf{y_2})d\mathbf{y_2}.
\end{split}\]
Plugging this inequality into~\eqref{Derive2}, we get
\[\begin{split}
E[\frac{E_c}{|V|}]<&E[\frac{N-1}{2}]\int_{\Omega}\frac{1}{|\Omega_0|}g(\mathbf{x})d\mathbf{x}\\
\overset{(a)}{=}&\frac{\lambda|\Omega_0|-1}{2}\int_{\Omega}\frac{1}{|\Omega_0|}g(\mathbf{x})d\mathbf{x}<
\frac{\lambda}{2}\int_{\Omega}g(\mathbf{x})d\mathbf{x},
\end{split}\]
where step (a) holds because $N$ is a Poisson process with mean $\lambda|\Omega_0|$. This concludes the proof.
\section{Proof of Theorem 1: MGC in a Matrix Graph is NP-Complete}\label{Proving_NP_complete}
\begin{figure}
  \centering
  \includegraphics[scale=0.3]{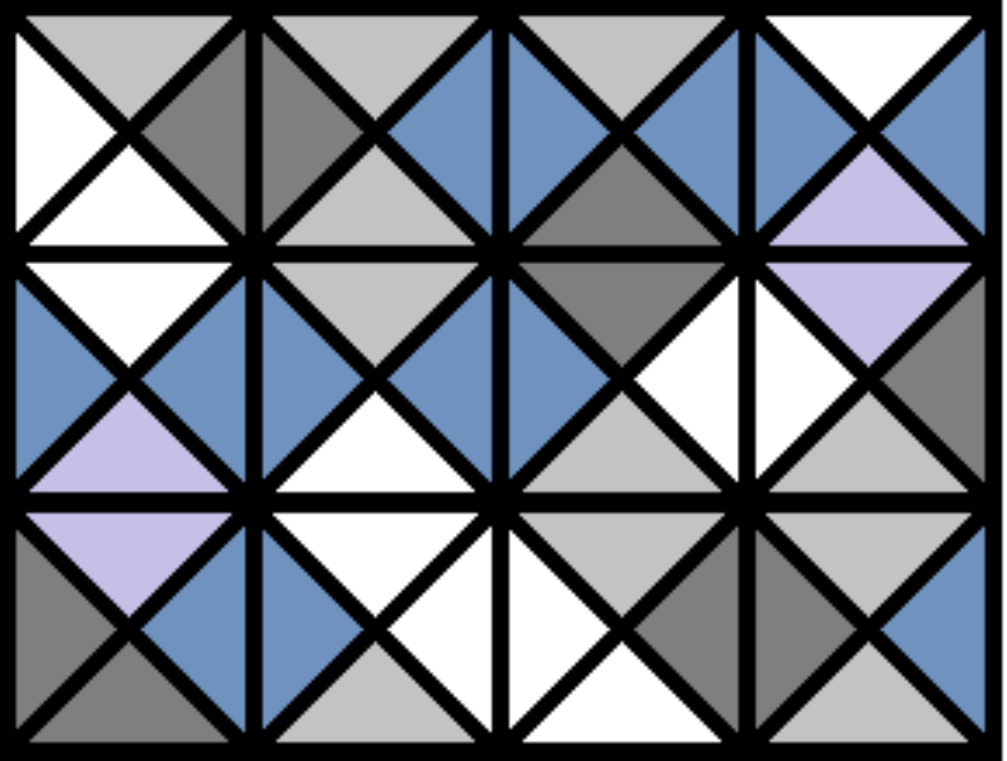}\\
  \caption{{A square tilling of a 3-by-4 finite square with Wang tiles. Any two neighboring tiles have the same color on the common edge.}}\label{fig:graph}\vspace{-0.2cm}
\end{figure}
Firstly, another NP-complete problem, the square tilling~\cite{9}, can be reduced to the MGC problem. Secondly, the MGC problem can be reduced to the maximum weighted independent set problem~\cite{8}. The second part will be shown in the analysis part and we will prove the first part.\\

Since the MWIS problem can be viewed as the MGC problem with one color, the MWIS on a matrix graph can be reduced to the MGC problem. Meanwhile, the MGC problem can be reduced to the general MWIS problem trivially by assigning each color to a MWIS. Thus, if we have proved the first statement, then MGC is NP-complete. The Wang tilling problem~\cite{23}\cite{24} is a classic unsolvable combinatorial problem. A Wang tile is a square with its four edges, namely north-,east-,west- and south-edges colored by a set of colors. Now assume that we have a set of Wang tiles \(W = \{w_1,w_2,...,w_{l-1},w_l\}\). A tiling \(T\) is said to be valid, if neighboring tiles has the same color. The following Figure shows an example of Wang tiling of a 3-by-4 square.In~\cite{23}, the author stated that whether a given set of Wang tiles can validly tile a \(M \times N\) square is NP-complete with the size of square. The author has not given the proof in~\cite{23}, but a following paper~\cite{24} proved a special case of original problem to be NP-complete. So the NP-completeness of the original tilling problem in~\cite{24} is also ensured.

Assume we have a square lattice denoted by \(\{(m,n)\}_{m=1,n=1}^{M,N}\) to be tiled by the given tile set $W$. To reduce the tilling of a square to a MWIS problem in a matrix graph $G=(V,E)$, we first construct the corresponding graph. Writing
\[
{ {V}} = \mathop \bigcup \limits_{m,n =  1}^{M,N}  {{ {V}}_{m,n}}
\]
for the vertex set of $G$, where each \({{ {V}}_{m,n}}=\{v_{m,n}^i\}_{i=1}^{l}\) is the vertex set of a L-complete graph \(G_{m,n}=K_l\). Each vertex \(v_{m,n}^i\) is associated with a tile \(w_i\). For two horizontally neighboring vertex sets, for example, \(V_{m,n}\) and \(V_{m+1,n}\), \((v_{m,n}^i,v_{m+1,n}^j)\in E\) if and only if tile \(w_i\)'s east edge does not match tile \(w_j\)'s west edge when they are respectively put at lattice point \((m,n)\) and \((m+1,n)\). Similarly, for two vertically neighboring vertex sets \(V_{m,n}\) and \(V_{m,n+1}\), \((v_{m,n}^i,v_{m,n+1}^j)\in E\) if and only if tile \(w_i\)'s south edge does not match tile \(w_j\)'s north edge when they are respectively put at lattice points \((m,n)\) and \((m,n+1)\). One can easily check that $G=(V,E)$ constructed above is a matrix graph consistent with definition 1.\\

Next we show that tiling the $M$-by-$N$ square can be reduced to finding a Maximum Weighted Independent Set in $G$ with vertex weights $u_{m,n}^i=1,\forall m,n,i$. Since each cell of the matrix graph is a complete graph \(K_l\), we can only pick up one vertex from each cell. If the maximum weighted independent set that we find in $G$ coincidentally picks up one vertex, with the index \(i(m,n)\), in each cell \({{ {G}}_{m,n}}\), then we can construct a tilling \(T(m,n)=i(m,n)\) of the square. Since the conflictions between two edges in the matrix graph indicates the mismatching between corresponding tiles, we know that this tilling \(T(m,n)=i(m,n)\) has no mismatching and is valid. As a result, the square tilling problem can be reduced to tell if the maximum weighted independent set in this matrix graph $G$ has a normalized weighted cardinality \(1/l\)(1 vertex from $l$ vertices in each cell). Since the tilling problem is NP-complete, the general MWIS problem in a matrix graph has the same difficulty.

\section{Proof of Lemma~\ref{lmm_DP}: MWIS in a one-dimensional Matrix Graph can be solved in Linear Time}\label{Dynamic_Programming}
\vspace{0.2cm}
In this section we show that MWIS problem in a one-dimensional matrix graph can be solved completely in linear time. Before giving out the dynamic programming algorithm, we need to review some properties of a one-dimensional matrix graph. We call a one-dimensional matrix graph is a \emph{Vector Graph}. Solving MWIS in a Vector Graph can give us convenience on solving MWIS in general matrix graphs. Moreover, apart from this convenience, we have mentioned that one-dimensional cellular network itself is of particular practical interests. Similar to Definition 1, we have\\
\begin{definition}
A Graph \(G=(V,E)\) is a Vector Graph if\\
\begin{equation}\label{19}
{ {V}} = \mathop \bigcup \limits_{n =  1 }^{N}  {V_n}
\end{equation}
\begin{equation}\label{20}
V_{n}=\{v_{n}^i\}_{i=1}^{l_{n}}
\end{equation}
An edge \((v_{n_1}^i,v_{n_2}^j)\in E\) only if
\begin{equation}\label{21}
|n_1-n_2|\le 1
\end{equation}

\end{definition}
\begin{algorithm}\caption{Finding MWIS in a one-dimensional matrix graph with constraints $Y$}\label{alg:B}
\textbf{Input}: A Vector Graph $G=(V,E)$, vertex weights $\mathbf{u}=(u_{n}^i)$, constraints $Y=\{Y_n\}_{n=1}^N,Y_n\subset X_n,\forall n$\\
\textbf{Output}: A MWIS $S^*$ which optimizes~\eqref{17}.\\
\\
Initialize\\
For all \(k_1\) s.t. \(\alpha_1^{k_1}\in X_1\)\\
\({\;\;\;\;\;\;}\)if   \(\alpha_1^{k_1}\notin Y_1\), set \(\wp^{k_1}_{(1)}=\emptyset\);\\
\({\;\;\;\;\;\;}\)else set \(\wp^{k_1}_{(1)}=(S_1)=(\alpha_1^{k_1})\);\\
end\\
\\
For \(n\) from \(2\) to \(N\)\\
\({\;\;\;\;\;\;}\)For all \(k_n\) s.t. \(\alpha_n^{k_n}\in X_n\)\\
\({\;\;\;\;\;\;}\)\({\;\;\;\;\;\;}\)if \(\alpha_n^{k_n}\notin Y_n\) set \(\wp^{k_n}_{(n)}=\emptyset\){\;\;\;\;\;\;}(\textbf{Extra Constraints})\\
\({\;\;\;\;\;\;}\)\({\;\;\;\;\;\;}\)else find \(l^{*}\in \{1,...,K_{n-1}\}\) s.t. \\
\({\;\;\;\;\;\;}\)\({\;\;\;\;\;\;\;\;}\)1).\((\alpha_{n-1}^{l^{*}},\alpha_n^{k_n})\in R_{n-1,n}\){\;\;\;\;}(\textbf{1D Constraints})\\
\({\;\;\;\;\;\;}\)\({\;\;\;\;\;\;\;\;}\)2).$\wp^{l^{*}}_{(n-1)}\neq \emptyset$\\
\({\;\;\;\;\;\;}\)\({\;\;\;\;\;\;\;\;}\)3).$l^{*}$ maximizes $ |\wp^{l^{*}}_{(n-1)}|_N $  (\textbf{Bellman Equation})\\
\({\;\;\;\;\;\;}\)\({\;\;\;\;\;\;}\)set \(\wp^{k_n}_{(n)}=(\wp^{l^{*}}_{(n-1)}S_n)=(\wp^{l^{*}}_{(n-1)}\alpha_n^{k_n})\).\\
\({\;\;\;\;\;\;}\)end\\
end\\

Find $k^{*}\in \{1,...,K_N\}$ that maximizes \(|\wp^{k}_{(N)}|_N\). \(\wp^{k^{*}}_{(N)}\) is the maximum weighted independent set that we are seeking for.\\
Output $S^{*}=\wp^{k^{*}}_{(N)}$.
\end{algorithm}
We use the notation $G_{n}$ to denote the cell that contains $V_{n}$. As a counterpart to~\eqref{8}, we decompose an independent set $S$ in $G$ by
\begin{equation}\label{22}
S = \mathop \bigcup \limits_{n=1}^N  {{ {S}}_{n}}
\end{equation}
and the Maximum Weighted Independent Set problem is aimed at maximizing
\begin{equation}\label{23}
\left| { {S}} \right|_N=\frac{\mathop \sum \limits_{n = 1}^{N} \sum \limits_{i=1}^{l_{n}} q_{n}^i u_{n}^i}{\mathop \sum \limits_{n = 1}^{N} \sum \limits_{i=1}^{l_{n}} u_{n}^i}
\end{equation}
where $\mathbf{u}=(u_{n}^i)$ are the vertex weights.\\

Then we define the sequence representation of an independent set. Noticing that if $S$ is an independent set of $G$, then for \(\forall n\), \(S_n\) is an independent set of the corresponding cell \(G_n\). We denote all possible independent sets of \(G_n\) by \(X_n=\{\alpha_n^1,\alpha_n^2,...,\alpha_n^{K_n}\}\). Suppose that \(S_n=\alpha_n^{k_n}\) for \(\forall n\), $S$ can be written in a \emph{N-sequence representation}
\begin{equation}\label{24}
\begin{split}
S=&(S_1S_2...S_n...S_N)\\
=&(\alpha_1^{k_1}\alpha_2^{k_2}...\alpha_n^{k_n}...\alpha_N^{k_N}),k_n\in \{1,2,...,K_n\},\forall n
\end{split}
\end{equation}
For simplicity of notation, we use the same letter $S$ for this sequence. When mentioning the normalized weighted cardinality (NWC) of a sequence $S$, we refer to the NWC of the corresponding independent set.\\\\
For each two adjacent cells \(G_n\) and \(G_{n+1}\), we define \(G_{n,n+1}\) as the induced graph containing \(G_n\) and \(G_{n+1}\), i.e. the graph that contains \(G_n\), \(G_{n+1}\) and the confliction edges between them. Then we define a relation
\begin{equation}\label{25}
\begin{split}
R_{n,n+1}=\{(\alpha,\beta)|&\alpha\in X_n,\beta\in X_{n+1}, 2-sequence{\;}(S_{n}S_{n+1})\\ &=(\alpha\beta){\;}is{\;}an{\;}independent{\;}set{\;}of{\;}G_{n,n+1}\}
\end{split}
\end{equation}
where $X_n$ still denotes all possible independent sets of \(G_n\). The relationship $R_{n,n+1}$ contains all possible combinations of $(S_n,S_{n+1})$ that satisfy confliction constraints imposed by edges connecting \(G_n\) and \(G_{n+1}\). That is to say, any two adjacent elements in a sequence representation must belongs to $R_{n,n+1}$. However, belonging to $R_{n,n+1}$ is not the sufficient condition for a pair $(S_n,S_{n+1})$ to be legal. In fact, apart from conflictions between \(G_n\) and \(G_{n+1}\), there will be constraints on $(S_{n},S_{n+1})$. This is particularly important in generalizing one-dimensional solution to a two-dimensional network, because conflictions may be introduced from the other dimension. So we need to formulate extra constraints, which are written as
\begin{equation}\label{26}
S_n=\alpha_n^{k_n}\in Y_n, Y_n\subset X_n
\end{equation}
This means that for each $\alpha_n^{k_n}$, $k_n$ can only take values in some certain subset of $\{1,2,...,K_n\}$ due to extra constraints.\\

Based on the above definitions, we give out a dynamic programming Algorithm 2 to solve the MWIS problem in a Vector Graph. In this algorithm we use the sequence \(\wp^{k_n}_{(n)}=(S_1S_2...S_n)\) to represent the searching branches of the sequence representation of the best independent set up to step $n$. In fact, $\wp^{k_n}_{(n)}$ is an n-sequence, i.e. an independent set of the first $n$ cells including $G_{1}$ to $G_{n}$, with the assumption that \(S_n\) equals to a specific \(\alpha_n^{k_n}\). In another word, the n-sequence $\wp^{k_n}_{(n)}$ should be written as \((\ast  \ast  \ast \alpha_n^{k_n})\). \(k_n\) obviously denotes the current state in the $n$th step. For each $k_n$, we only reserve one optimal path \(\wp^{k_n}_{(n)}\), which is similar to the classic Viterbi Decoding[12]. By definition, \(|\wp^{k_n}_{(n)}|_N\) still denotes the NWC of $\wp^{k_n}_{(n)}$, which is going to be optimized. Since Algorithm 2 is a direct application of dynamic programming and the proof is quite straightforward, we omit the proof in this paper.\\
\section{Proof of Lemma~\ref{fl_div_lmm}: A floor division scheme}\label{floor_division_scheme}
Now we prove Lemma~\ref{fl_div_lmm}, which indicates that for any $M$ and $L<M$, there is a floor division scheme that guarantees the properties i) to iii). We prove this lemma by explicitly constructing $L$ floor divisions $F=\mathop \cup \limits_{j =  1 }^{Q}  {F_t^j}$, $t$ from 1 to $L$. This construction is also useful in the Algorithm 1. Assume $r=M-L(Q-1)$. We know that $0<r\le L$. For $t=0$, we use~\eqref{11} to build each $F_0^j, \forall j$. We set the marginal rows as $m(0,j)=L(j-1)+1$. For $1<t\le r-1$, we build
\begin{equation}\label{27}
F_t^j=t+F_0^j=\{m\in F|m-t\in F_0^j\},t=1,2,...,r-1
\end{equation}
This is like $t=1$ in Fig. 3, i.e. the second floor division where $r=2$. For these floor divisions, we set $m(t,j)=t+L(j-1)+1$. Note that here + and - are in the sense of modulo $M$. If $r$ equals to $L$, which means that $M$ is divisible by $L$, we have finished building floors. Otherwise, for $t$ from $r$ to $L-1$, we set
\begin{equation}\label{28}
F_t^j=t+F_0^j=\{m\in F|m-t\in F_0^j\},j=1,...,Q-2,\forall t
\end{equation}
\begin{equation}\label{29}
F_t^{Q-1}=\{L(Q-2)+1+t,...,M\},t=r,...,L-1
\end{equation}
\begin{equation}\label{30}
F_t^{Q}=\{1,...,t\},t=r,...,L-1
\end{equation}
For $j$ from 1 to $Q-1$, we still set the marginal rows as $m(t,j)=t+L(j-1)+1$. For $j=Q$, we do not set any rows to be marginal rows. These floor divisions are like the third and fourth divisions in Fig. 3. We clearly see from Fig. 3 that this floor division scheme results in the cyclic behavior of marginal rows, and thus, each element from $\{1,...,M\}$ shows up as the marginal row once. The property (i) in Lemma~\ref{fl_div_lmm} can be easily checked.

\end{document}